\newcommand{\algmargin}{\the\ALG@thistlm}
\newlength{\whilewidth}
\algnewcommand{\parState}[1]{\State%
	\parbox[t]{\dimexpr\linewidth-\algmargin}{\strut #1\strut}}
\algnewcommand{\parComment}[1]{\Comment%
	\parbox[t]{\dimexpr\linewidth-\algmargin}{\strut #1\strut}}
\theoremstyle{plain}
\newtheorem{theorem}{Theorem}[]
\newtheorem{lemma}[]{Lemma}
\newcommand{\mydef}{:=}
\newcommand{\Rc}{\mathbb{R}}
\newcommand{\Cc}{\mathbb{C}}
\newcommand{\Nb}{N_{bus}}
\newcommand{\Ng}{N_{gen}}
\newcommand{\V}{\mathbf{V}}
\newcommand{\scrv}[1]{\mathit{v}_{#1}}
\newcommand{\scrw}[1]{\mathit{w}_{#1}}
\newcommand{\Vv}[1]{V_{#1}}
\newcommand{\Ww}[1]{W_{#1}}
\newcommand{\cV}[1]{V^\star_{#1}}
\newcommand{\cS}[1]{S^\star_{#1}}
\newcommand{\Ss}[1]{S_{#1}}
\newcommand{\Ybus}[1]{Y_{#1}}
\newcommand{\Gg}[1]{G_{#1}} 
\newcommand{\Bb}[1]{B_{#1}} 
\newcommand{\Vdq}[1]{V_{#1}} 
\newcommand{\Pp}[1]{P_{#1}}
\newcommand{\Qq}[1]{Q_{#1}}
\newcommand{\U}{\mathbf{U}}
\newcommand{\M}[1]{\mathbf{M}_{#1}}
\newcommand{\Hh}[1]{\mathbf{H}_{#1}}
\begin{document}

\title{A Holomorphic Embedding Based Continuation Method for Identifying Multiple Power Flow Solutions}
\author{
\IEEEauthorblockN{Dan Wu,$^{\dagger}$ \IEEEmembership{Member, IEEE} and Bin Wang,$^{\ddagger}$ \IEEEmembership{Member, IEEE}}
\thanks{$\dagger$: Laboratory for Information and Decision Systems, Massachusetts Institute of Technology, Cambridge, MA, danwumit@mit.edu; $\ddagger$: Department of Electrical and Computer Engineering, Taxes A\&M University, College Station, TX, binwang@tamu.edu}%
}
\maketitle
\begin{abstract}
In this paper, we propose an efficient continuation method for locating multiple power flow solutions. We adopt the holomorphic embedding technique to represent solution curves as holomorphic functions in the complex plane. The holomorphicity, which provides global information of the curve at any regular point, enables large step sizes in the path-following procedure such that non-singular curve segments can be traversed with very few steps. When approaching singular points, we switch to the traditional predictor-corrector routine to pass through them and switch back afterward to the holomorphic embedding routine. We also propose a warm starter when switching to the predictor-corrector routine, i.e. a large initial step size based on the poles of the Pad\'{e} approximation of the derived holomorphic function, since these poles reveal the locations of singularities on the curve. Numerical analysis and experiments on many standard IEEE test cases are presented, along with the comparison to the full predictor-corrector routine, confirming the efficiency of the method.
\end{abstract}

\begin{IEEEkeywords}
power flow problem, holomorphic embedding, continuation
\end{IEEEkeywords}

\section{Introduction}
\label{sec:intro}

 The electric power grid is a critical energy infrastructure for power generation, transmission, and distribution in modern society. The inherent nonlinearity of power grid introduces a great challenge to analyze its dynamical behaviors when subject to disturbances, especially when penetrated with a large amount of intermittent renewable energies. Identifying the region of attraction about the operating condition, i.e. a stable equilibrium point (SEP) of the underlying dynamical system, can significantly improve the situational awareness and, therefore, will be of great importance to avoid blackouts. Characterizing this region requires the knowledge of a special type of unstable equilibrium point (UEP) which is called the \emph{type-1} UEP \cite{pai2012:energy,chiang2011:direct}. Determining them usually requires locating all nearby equilibria. In classical model \cite{sauer1998:power}, equilibria are the solutions to the \emph{power flow equations} \cite{klos1975:non,johnson1977:extraneous,tamura1980multiple}. 
	
A \emph{high-voltage solution} in the range of $[0.9,~1.1]$\footnote{A more restricted range may be assumed to be $[0.95,~1.05]$ for transmission systems.} p.u. represents a steady state under which the system can be well-operated. This solution is usually the SEP in transient stability analysis, while other solutions are UEPs. For tree structured networks, the high-voltage solution is unique \cite{chiang1990:existence}. However, it is possible that mesh networks can have multiple high-voltage solutions with either circulated flow \cite{korsak1972:question} or reversed power flow \cite{nguyen2014:appearance}. Although being avoided in the normal operations, circulated flow can happen during fault transients. Meanwhile, the reversal of power flow can become very common in future power grids as distributed energy resources (DER) keeps penetrating to distribution networks. To better characterize stability region and to examine other high-voltage operating points,
finding multiple power flow solutions plays a key role.

Nowadays a single high-voltage solution can be solved very efficiently. For example, systems with about 10,000 buses can be solved within a second \cite{geiri:pf}. However, the largest system that can be provably solved for all solutions is the 14 bus system \cite{mehta2016:numerical}. The efficiency of solving a single high-voltage solution comes from the knowledge of a good initial guess for local solvers to converge. But it is rather hard to acquire appropriate initial guesses for other solutions. If using random seeds, the complexity increases exponentially as the system size increases. Therefore, a systematic method is required to find these solutions.

Early attempts to find multiple power flow solutions dates back to $1970$s when  \cite{tavora1972:equilibrium} examined a 3-node system which admits $0$, $2$, $4$ or $6$ solutions. In 1989, \cite{salam1989:parallel,salam1989:parallel2} introduced the probability-one homotopy continuation method to find all the complex-valued solutions to the power flow problem. The homotopy continuation method requires estimating the total number of solutions to the power flow problem, which is still an ongoing research. In 1982, \cite{baillieul1982:geometric} sharpened the solution number bound from the classic Bezout's bound, $2^{2N_{\rm{bus}}-2}$, to a combinatorial bound, $C_{2N_{\rm{bus}} -2}^{N_{\rm{bus}}-1}$, where $N_{\rm{bus}}$ is the number of nodes in a power grid. Recently,  \cite{mehta2016:numerical} applied a polyhedral homotopy continuation method to completely solve the IEEE standard 14-bus system by the Bernstein-Khovanskii-Kushnirenko (BKK) bound which is sharper than the Bezout's bound. However, evaluating the BKK bound is very expensive. To further explore a simpler bound,  \cite{chen2018:network} introduced the adjacent polytope bound, which is sharper than the BKK bound and more computable. 

While progressive, the homotopy method usually ends up with a huge amount of complex-valued solutions which are fictitious power flow solutions. To only identify actual power flow solutions,  \cite{ma1993} introduced the idea of curve design which connects different real solutions by some 1-dimensional curves. Following these curves power flow solutions can be reached one by one\footnote{A very special type of test cases can be solved much more efficiently by some techniques from algebraic geometry. Interested readers are referred to \cite{coss2018:locating}. However, there is no such efficient algebraic geometry method for solving a general power flow case at present.}. Though efficient, \cite{molzahn2013:counterexample} provided a counter-example for \cite{ma1993}. To rectify their method, an elliptical formulation of the power flow problem is used in \cite{lesieutre2015:efficient} to restrict the curve design on high dimensional ellipses. It helps solve all the standard IEEE test cases which can be verified by the homotopy method in a reasonable time\footnote{Currently, there is no rigorous theoretical guarantee to show that the elliptical formulation can always connect all the real solutions. It is an ongoing research.}, including the counter-example in \cite{molzahn2013:counterexample}. The existence and construction of elliptical formulation were provided in \cite{wu2017:algebraic} and extended to the optimal power flow problem to find multiple local extrema for hard problems in \cite{wu2018:deterministic}. 

The curve tracing routine performed in \cite{lesieutre2015:efficient,wu2017:algebraic,wu2018:deterministic} is a traditional predictor-corrector algorithm which adopted a quadratic predictor \cite{schwetlick1987:higher}, Newton's method for corrector, and an adaptive step-length control \cite{den1981:steplength}. Many variations of the predictor-corrector algorithm exist, 
however, most of them depend only on the local information or previously solved points. To accelerate the curve tracing, in this paper we design a new hybrid algorithm called the \emph{holomorphic embedding based continuation} (HEBC) method to replace the traditional predictor-corrector algorithm during most of the curve tracing periods. It applies the holomorphic embedding technique to quickly pass through the non-singular curve segments by utilizing the global information of that curve, and uses a predictor-corrector routine to travel across singularities. 

The holomorphic embedding method (HEM) was introduced by Trias \cite{trias2012:1stHEM} in 2012 as a new power flow solver. The basic idea is to parameterize a polynomial system by an extra free variable and acquires the solution curve information by power series. Early attempts to use parameterization and power series for solving power flows started with \cite{sauer1981:explicit} and followed by \cite{xu1998:series,de2007:non}. Recently, HEM was extended to some applications with different modelings \cite{trias2016:dc,asu2017:sdbifur,utk2018:hemplf,utk2019:hemmotor}. Two features of HEM are particularly useful in our circumstance to improve searching efficiency. First, HEM can release us from local predictor-corrector scheme and provide with very long arc steps on the solution curve. This can largely reduce the burden of repeatedly solving linear systems in the corrector part. Moreover, the smallest real-valued pole of Pad\'{e} approximation can be used to design an appropriate step length when passing through singular point. It avoids overly large step sizes to improve numerical stability, and keeps step sizes progressive to maintain efficiency.

The contributions of this paper are summarized below.
\begin{enumerate}
	\item Showed an equivalent curve design for the elliptical formulation of the power flow problem;
	\item Proposed a hybrid numerical continuation method HEBC  for finding multiple power flow solutions;
	\item Proposed a warm starter to quickly initiate the predictor-corrector routine for passing through singularities; 
	\item Showed that HEBC outperforms the traditional predictor-corrector algorithm \cite{lesieutre2015:efficient} for all the tested cases;
	\item Computed solution sets\footnote{Solution sets will be available online soon.} for several large test cases which currently are intractable by homotopy continuation method or the similar.
\end{enumerate}


\section{Description of Power Flow Problem}
\label{sec:pf}
Throughout this paper we adopt the power flow formulation in rectangular coordinates.

\subsection{Power Flow Equations in Rectangular Coordinates}
\label{subsec:pf_rectangular}
Consider a connected power grid with $N_{\rm{bus}}$ nodes. Let the node voltage vector be
\begin{equation}
	\V \mydef \V_{\rm{d}} + j\V_{\rm{q}} \label{eq:complex_voltage}
\end{equation}
where $\V \in \Cc^{N_{\rm{bus}}}$; $\V_{\rm{d}} \in \Rc^{N_{\rm{bus}}}$ and $\V_{\rm{q}} \in \Rc^{N_{\rm{bus}}}$ are the real and imaginary parts of $\V$, respectively .

For the PQ bus we have
\begin{equation}
	\cV{k} \sum_{n=1}^{N_{\rm{bus}}} \Ybus{n,k} \Vv{n} = \cS{k} \label{eq:complex_pf}
\end{equation}%
where $\Vv{k}$ and $\Vv{n}$ are the corresponding entries of $\V$; $\Ybus{n,k}$ is the $(n,k)$-th entry of the bus admittance matrix $\mathbf{Y} \in \mathbb{C}^{N_{\rm{bus}} \times N_{\rm{bus}}}$; $\Ss{k} \in \mathbb{C}$ is the complex power load at bus $k$; superscript star $\star$ represents the conjugate operator.

Separating the real and imaginary parts of Equation \eqref{eq:complex_pf} gives the two equations about a PQ bus
\begin{subequations}
	\begin{align}
	 \Pp{k} &= \Vdq{\rm{d}, \it{k}} \sum_{n=1}^{N_{\rm{bus}}} \big(\Gg{n,k} \Vdq{\rm{d},\it{n}} - \Bb{n,k} \Vdq{\rm{q},\it{n}} \big) \nonumber
	 \\ &+ \Vdq{\rm{q},\it{k}} \sum_{n=1}^{N_{\rm{bus}}} \big(\Gg{n,k} \Vdq{\rm{q},\it{n}} + \Bb{n,k} \Vdq{\rm{d},\it{n}} \big) \label{eq:Pld_eqt}		
	\\ \Qq{k} &=  \Vdq{\rm{q},\it{k}} \sum_{n=1}^{N_{\rm{bus}}} \big(\Gg{n,k} \Vdq{\rm{d},\it{n}} - \Bb{n,k} \Vdq{\rm{q},\it{n}} \big) \nonumber
	\\ &- \Vdq{\rm{d},\it{k}} \sum_{n=1}^{N_{\rm{bus}}} \big(\Gg{n,k} \Vdq{\rm{q},\it{n}} + \Bb{n,k} \Vdq{\rm{d},\it{n}} \big) \label{eq:Qld_eqt}
	\end{align} \label{eq:PQ_bus}%
\end{subequations}
where $\Pp{k} \le 0$ and $\Qq{k} \le 0$\footnote{Usually a load absorbs reactive power, but it can possibly generate reactive power. In that case $\Qq{k} \ge 0$.} are the fixed active and reactive power loads at bus $k$; $\Gg{n,k}$ and $\Bb{n,k}$ are the $(n,k)$-th entries of the bus conductance matrix $\mathbf{G}$ and the bus susceptance matrix $\mathbf{B}$\footnote{$\mathbf{Y}=\mathbf{G}+j\mathbf{B}$}; $\Vdq{\rm{d} ,\it{k}}$, $\Vdq{\rm{d},\it{n}}$, $\Vdq{\rm{q},\it{k}}$ and $\Vdq{\rm{q},\it{n}}$ are the corresponding entries of $\V_{\rm{d}}$ and $\V_{\rm{q}}$, which are unknown variables that should be determined.

For the PV bus we have
\begin{subequations}
	\begin{align}
	\Pp{k} &= \Vdq{\rm{d},\it{k}} \sum_{n=1}^{N_{\rm{bus}}} \big(\Gg{n,k} \Vdq{\rm{d},\it{n}} - \Bb{n,k} \Vdq{\rm{q},\it{n}} \big) \nonumber
	\\ &+ \Vdq{\rm{q},\it{k}} \sum_{n=1}^{N_{\rm{bus}}} \big(\Gg{n,k} \Vdq{\rm{q},\it{n}} + \Bb{n,k} \Vdq{\rm{d},\it{n}} \big) \label{eq:Pgen_eqt}		
	\\ V_{\rm{m},\it{k}}^2 &= \Vdq{\rm{d},\it{k}}^2 + \Vdq{\rm{q},\it{k}}^2 \label{eq:V_eqt}
	\end{align} \label{eq:PV_bus}%
\end{subequations}
where $\Pp{k}$ is a fixed active power injection at bus $k$ which is usually positive but can be negative; $V_{\rm{m},\it{k}}$ is the fixed voltage magnitude at bus $k$.

For the slack bus with an angle reference we have
\begin{subequations}
	\begin{align}
	V_{\rm{m},\it{s}}^2 &= \Vdq{\rm{d},\it{s}}^2 + \Vdq{\rm{q},\it{s}}^2 \label{eq:Vslack_eqt}		
	\\ 0 &= \Vdq{\rm{q},\it{s}} \label{eq:angle_eqt}
	\end{align} \label{eq:slack_bus}%
\end{subequations}
where subscript $s$ is the slack bus number; $\Vv{\rm{m},\it{s}}$ is the slack bus voltage magnitude. 

One can further substitute \eqref{eq:angle_eqt} in \eqref{eq:Vslack_eqt}, \eqref{eq:PV_bus} and \eqref{eq:PQ_bus} to eliminate $\Vdq{\rm{q},\it{s}}$. Finally, \eqref{eq:PQ_bus}, \eqref{eq:PV_bus}, and \eqref{eq:slack_bus} together are the power flow equations we will investigate in this paper. Note that they are in quadratic form, thus can be written succinctly as
\begin{equation}
	PF(\U) \mydef \{ f_i(\U) = \U^T \M{i} \U - r_i,~ i=1,\cdots,2 N_{\rm{bus}} \} \label{eq:pf}
\end{equation}%
where $\U \mydef [\V_{\rm{d}}^T~~ \V_{\rm{q}}^T]^T$ is the unknown variable vector; $\M{i} \in \mathbb{SR}^{2 N_{\rm{bus}} \times 2 N_{\rm{bus}}}$ is a symmetric constant matrix for the quadratic part; $r_i \in \Rc$ is the constant scalar part.

\subsection{Equivalent Curve Design of Elliptical Formulation of Power Flow Equations}
As introduced in Section~\ref{sec:intro}, \cite{molzahn2013:counterexample} presented a counter-example that fails the proposed algorithm in \cite{ma1993} for finding all the real-valued power flow solutions. Then, \cite{lesieutre2015:efficient} introduced the concept of elliptical formulation of power flow equations which substantially changes the topology of path following curves and succeeded for that example. Later, \cite{wu2017:algebraic} showed the existence of elliptical formulation under mild conditions and constructed it in a systematical way. 

We start our discussion with a given invertible linear map $\mathscr{E} \in \Rc^{2 N_{\rm{bus}} \times 2 N_{\rm{bus}}}$ that sends Equation \eqref{eq:pf} to a set of high dimensional ellipses $EF(\U)$. The construction of $\mathscr{E}$ can be found in \cite{wu2017:algebraic,lesieutre2015:efficient}. Consider 
\begin{displaymath}
	\mathscr{E}:PF(\U) \to EF(\U)
\end{displaymath}
with 
\begin{displaymath}
	EF(\U) \mydef \{ g_i(\U) = \U^T \Hh{i} \U - \gamma_i,~i=1,\cdots,2 N_{\rm{bus}} \}
\end{displaymath}
where $\Hh{i} \in \mathbb{SR}^{2 N_{\rm{bus}} \times 2 N_{\rm{bus}}}$ and $\Hh{i} \succ 0$; $\gamma_i > 0$.

Let $\mathscr{Z}(h;x)$ be the operator that takes the projection of $\{(x,y) | h(x,y) = 0 \}$ onto $x$; define
\begin{displaymath}
	EF_{l-}(\U) \mydef EF(\U) - \{g_l(\U)\}
\end{displaymath}
\begin{displaymath}
EF_{l,\alpha}(\U,\alpha) \mydef EF_{l-}(\U) \cup \{g_l(\U)-\alpha,~\alpha \in \Rc \}.
\end{displaymath}

Since $EF(\U)$ defines a determined algebraic system, its algebraic set is generically 0-dimensional in $\Rc^{2 N_{\rm{bus}}}$. By removing one equation from $EF(\U)$, $EF_{l-}(\U)$ acquires one degree of freedom and defines a 1-dimensional algebraic set in $\Rc^{2 N_{\rm{bus}}}$. On the other hand, adding one extra degree of freedom to $EF(\U)$ makes the algebraic set of $EF_{l,\alpha}(\U,\alpha)$ 1-dimensional in $\Rc^{2 N_{\rm{bus}}+1}$.
The following Lemma~\ref{lemma:1} shows an equivalence between these two 1-dimensional algebraic sets.

\begin{lemma}\label{lemma:1}
	\begin{displaymath}
		\mathscr{Z}(EF_{l-};\U) = \mathscr{Z}(EF_{l,\alpha};\U)
	\end{displaymath}
\end{lemma}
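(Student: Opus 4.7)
The plan is to unwind the definitions of the projection operator $\mathscr{Z}$ and the two augmented systems $EF_{l-}$ and $EF_{l,\alpha}$, and then prove the set equality by a direct double-inclusion argument. Because $\alpha$ enters the system $EF_{l,\alpha}$ only through the single equation $g_l(\U) - \alpha = 0$, and because $\alpha$ is free to range over all of $\Rc$, I expect the argument to be essentially a tautology: $\alpha$ is a slack variable that contributes no constraint on $\U$ after projection.

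First, I would expand the two sides explicitly. The left-hand side is
\begin{displaymath}
\mathscr{Z}(EF_{l-};\U) = \{\U \in \Rc^{2N_{\rm{bus}}} : g_i(\U)=0 \text{ for all } i \neq l\},
\end{displaymath}
while the right-hand side is
\begin{displaymath}
\mathscr{Z}(EF_{l,\alpha};\U) = \{\U \in \Rc^{2N_{\rm{bus}}} : \exists\,\alpha \in \Rc \text{ with } g_i(\U)=0\ (i\neq l),\ g_l(\U)-\alpha=0\}.
\end{displaymath}

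Next I would establish the two inclusions. For $(\subseteq)$, take any $\U$ with $g_i(\U)=0$ for $i\neq l$; then setting $\alpha \mydef g_l(\U) \in \Rc$ produces a point $(\U,\alpha)$ in the zero set of $EF_{l,\alpha}$, whose projection onto the $\U$-coordinates is $\U$ itself. For $(\supseteq)$, any $\U$ in the projected set admits some $\alpha \in \Rc$ with $(\U,\alpha)$ solving $EF_{l,\alpha}$; discarding the equation $g_l(\U)-\alpha=0$ (which places no restriction on $\U$, only on $\alpha$) yields the remaining conditions $g_i(\U)=0$ for $i\neq l$, so $\U \in \mathscr{Z}(EF_{l-};\U)$.

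I do not anticipate a real obstacle here. The only subtlety is to note explicitly that $g_l(\U) - \alpha = 0$ is always solvable in $\alpha$ for any $\U \in \Rc^{2N_{\rm{bus}}}$, because $g_l$ takes real values and $\alpha$ is unconstrained in $\Rc$; this is what makes the slack equation vacuous after projection. It may also be worth a brief remark on dimension, connecting the lemma back to the preceding sentence: removing $g_l$ frees one degree of freedom in $\Rc^{2N_{\rm{bus}}}$, while augmenting with $\alpha$ adds one ambient dimension but reimposes one equation, so both sides describe the same $1$-dimensional curve in $\U$-space. This geometric picture motivates the equivalence that the algebra then confirms.
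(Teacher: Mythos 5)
Your double-inclusion argument is correct; the key point---that $g_l(\U)-\alpha=0$ is always solvable in $\alpha$ for any $\U$, so the slack equation is vacuous after projecting onto $\U$---is precisely why the paper declares this proof trivial and omits it. Your write-up simply supplies that omitted (and intended) argument, so nothing further is needed.
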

The proof is trivial and omitted here.
Next, we state the equivalent curve design of elliptical formulation in Theorem~\ref{thm:1}.

\begin{theorem} \label{thm:1}
	\begin{displaymath}
	\mathscr{Z}(EF_{l-};\U) = \mathscr{Z}\big(\{PF(\U) - \alpha \mathscr{E}^{-1} \mathbf{e}_l\};\U\big)
	\end{displaymath}
	where $\mathbf{e}_l \in \Rc^{2 N_{\rm{bus}}}$ is a unit column vector with the $j$-th entry being 1.
\end{theorem}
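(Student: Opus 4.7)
The plan is to chain two simple observations: first reduce the right-hand side back to the elliptical formulation via Lemma~\ref{lemma:1}, then use the linearity and invertibility of $\mathscr{E}$ to transfer the one-parameter family between the rectangular and elliptical coordinates.

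First, I would interpret the linear map $\mathscr{E}$ acting on the polynomial systems in vector form. Writing $PF(\U) = [f_1(\U),\ldots,f_{2\Nb}(\U)]^T$ and similarly $EF(\U) = [g_1(\U),\ldots,g_{2\Nb}(\U)]^T$, the statement $\mathscr{E}:PF(\U)\to EF(\U)$ is that $EF(\U) = \mathscr{E}\,PF(\U)$ as identities of polynomials. From the definition, $EF_{l,\alpha}(\U,\alpha)=0$ is precisely the augmented system
\begin{displaymath}
EF(\U) - \alpha\,\mathbf{e}_l = 0,
\end{displaymath}
since $\alpha$ is subtracted from only the $l$-th equation. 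Because $\mathscr{E}$ is invertible, I can left-multiply this system by $\mathscr{E}^{-1}$ without changing its solution set in $(\U,\alpha)$:
\begin{displaymath}
PF(\U) - \alpha\,\mathscr{E}^{-1}\mathbf{e}_l = 0.
\end{displaymath}
Thus the algebraic sets of $EF_{l,\alpha}(\U,\alpha)$ and $\{PF(\U)-\alpha\mathscr{E}^{-1}\mathbf{e}_l\}$ coincide in $\Rc^{2\Nb+1}$, hence so do their projections onto the $\U$-coordinates:
\begin{displaymath}
\mathscr{Z}(EF_{l,\alpha};\U) = \mathscr{Z}\big(\{PF(\U)-\alpha\mathscr{E}^{-1}\mathbf{e}_l\};\U\big).
\end{displaymath}

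Combining this with Lemma~\ref{lemma:1}, which gives $\mathscr{Z}(EF_{l-};\U)=\mathscr{Z}(EF_{l,\alpha};\U)$, yields the desired identity. The only real subtlety, and the step I would write out most carefully, is the interpretation of $\mathscr{E}$ as acting on the \emph{equations} rather than on the variables: one must verify that replacing the equation vector by an invertible linear combination of its entries leaves the zero locus (in the joint space including $\alpha$) unchanged, which is immediate from $\mathscr{E}$ being invertible but worth stating explicitly so that the projection onto $\U$ commutes with the transformation. No compactness, genericity, or dimensional argument is needed beyond this linear-algebraic observation, so I do not expect a hard step.
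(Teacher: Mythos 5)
Your proposal is correct and follows essentially the same route as the paper: write $EF_{l,\alpha}$ as the augmented system $EF(\U)-\alpha\mathbf{e}_l$, apply $\mathscr{E}^{-1}$ using linearity and invertibility to obtain $PF(\U)-\alpha\mathscr{E}^{-1}\mathbf{e}_l$ with the same zero locus, and then invoke Lemma~\ref{lemma:1}. Your explicit remark that an invertible linear recombination of the equations preserves the zero set is exactly the content the paper compresses into the phrase that $\mathscr{E}$ is a homeomorphism.
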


\begin{proof}
	By definition, $EF_{l,\alpha}(\U,\alpha)$ can also be expressed as \{$EF(\U) - \alpha \mathbf{e}_l$\}. Then we have
	\begin{subequations}
		\begin{align}
	\mathscr{E}^{-1} \big( EF(\U) - \alpha \mathbf{e}_l \big) &= \mathscr{E}^{-1} \big(EF(\U)\big) - \alpha \mathscr{E}^{-1} \mathbf{e}_l \nonumber
	\\ &= PF(\U) - \alpha \mathscr{E}^{-1} \mathbf{e}_l. \nonumber
		\end{align}
	\end{subequations} 
	
	Since $\mathscr{E}$ is an invertible linear map, it is a homeomorphism. Hence, 
	\begin{displaymath}
		\mathscr{Z}(EF_{l,\alpha};\U) = \mathscr{Z}\big(\{PF(\U) - \alpha \mathscr{E}^{-1} \mathbf{e}_l\};\U\big).
	\end{displaymath}
	
	Finally, by Lemma~\ref{lemma:1} we conclude that
	\begin{displaymath}
	\mathscr{Z}(EF_{l-};\U) = \mathscr{Z}\big(\{PF(\U) - \alpha \mathscr{E}^{-1} \mathbf{e}_l\};\U\big).
	\end{displaymath}
\end{proof}


\section{Holomorphic Embedding Technique}
\label{sec:holo}
Theorem~\ref{thm:1} states that the 1-dimensional curves derived from the elliptical formulation $EF_{l-}$ can be acquired alternatively from a particular parameterized power flow problem $PF(\U) - \alpha \mathscr{E}^{-1} \mathbf{e}_l$. In Section~\ref{sec:pf} this $\alpha$ is restricted to a real-valued scalar to support one extra degree of freedom. If we allow $\alpha$ to be a complex number, the parameterized curve resides in the complex plane and becomes a 2-dimensional surface in the real space. If this complex-value parameterized curve happens to be governed by holomorphic functions, it is called the holomorphic embedding. The advantage of being holomorphic is that the global information of the embedded curve is determined and singularities on the curve can be predicted by analytic continuation techniques. 

\subsection{Holomorphic Embedding of Power Flow Equations}
\subsubsection{PQ Bus Embedding}
We start with the basic complex power balance equation for PQ bus in Equation \eqref{eq:complex_pf}. Note that $\Ss{k} = \Pp{k} + j \Qq{k}$ 
we define
\begin{subequations}
	\begin{align}
		\Pp{k}(\alpha) &\mydef (1+K_{\rm{p},\it{k}} \alpha) \Pp{k,0} \\
		\Qq{k}(\alpha) &\mydef (1+K_{\rm{q},\it{k}} \alpha) \Qq{k,0}
	\end{align}
\end{subequations}%
where $\alpha \in \Cc$; $K_{\rm{p},\it{k}}$ and $K_{\rm{q},\it{k}}$ are obtained from $\mathscr{E}^{-1} \mathbf{e}_l$ for some $l$; $\Pp{k,0}$ and $\Qq{k,0}$ are the fixed starting active and reactive power which admit a known solution. 

If we define a new variable $\Ww{k} \mydef \Vv{k}^{-1}$ for $\Vv{k} \neq 0$, and restrict parameterized $\Ww{k}(\alpha)$ to be reflective such that $\Ww{k}(\alpha) = \Ww{k}(\alpha^\star)$, then Equation~\eqref{eq:complex_pf} can be written as
\begin{subequations}
	\begin{align}
		\sum_{n=1}^{N_{\rm{bus}}} \Ybus{n,k} \Vv{n}(\alpha) &= \bigg( (1+K_{\rm{p},\it{k}} \alpha) \Pp{k,0}-  j (1  \nonumber
		\\ & + K_{\rm{q},\it{k}} \alpha) \Qq{k,0} \bigg) \Ww{k}^\star(\alpha^\star) \label{eq:para_complex_pf} \\
		\Vv{k}(\alpha) \Ww{k}(\alpha) &= 1 \label{eq:para_VW}
	\end{align} \label{eq:para_PQ_bus}%
\end{subequations}%
Note that on the right hand side of \eqref{eq:para_complex_pf} we use $\Ww{k}^\star(\alpha^\star)$ instead of $\Ww{k}^\star(\alpha)$ since they are equal by the reflective property\footnote{A more detailed discussion on the reflective requirement can be found in \cite{trias2015:reflection}.}. 

Since $\Vv{k}(\alpha)$ and $\Ww{k}(\alpha)$ are holomorphic \cite{trias2015:reflection}, we can use power series to represent them. 
Then, \eqref{eq:para_PQ_bus} can be re-written as
\begin{subequations}
	\begin{align}
		\sum_{n=1}^{N_{\rm{bus}}} \bigg( \Ybus{n,k} \sum_{i=0}^{\infty} \scrv{n,i} \alpha^i \bigg) &= \bigg( (1+K_{\rm{p},\it{k}} \alpha) \Pp{k,0} - j (1 \nonumber
		\\&+ K_{\rm{q},\it{k}} \alpha) \Qq{k,0} \bigg)\sum_{i=0}^{\infty} \scrw{n,i}^\star \alpha^i \label{eq:para_complex_pf_Taylor} \\
		\sum_{i=0}^{\infty} \scrv{n,i} \alpha^i \sum_{i=0}^{\infty} \scrw{n,i} \alpha^i &= 1 \label{eq:para_VW_Taylor}
	\end{align} \label{eq:para_PQ_bus_Taylor}%
\end{subequations}%
where $\scrv{n,i}$ and $\scrw{n,i}$ are the power series coefficients.

Matching up coefficients for every monomial of $\alpha$ in \eqref{eq:para_complex_pf_Taylor} and \eqref{eq:para_VW_Taylor} we can solve $(\scrv{k,1}, \scrv{k,2}, \cdots)$ and $(\scrw{k,1}, \scrw{k,2}, \cdots)$ recursively as long as $\scrv{k,0}$ and $\scrw{k,0}$ are provided.

\subsubsection{PV Bus Embedding}
Next, we consider the holomorphic embedding for PV bus equations. 
To retain holomorphicity, we need to bring back the reactive power balance equation \eqref{eq:Qld_eqt} to \eqref{eq:PV_bus} and consider reactive power input as a new variable. Again, by defining $\Ww{k} \mydef \Vv{k}^{-1}$ for $\Vv{k} \neq 0$ and restricting parameterized $\Ww{k}(\alpha)$ to be reflective we have the holomorphic embedded equations
\begin{subequations}
	\begin{align}
		\sum_{n=1}^{N_{\rm{bus}}} \Ybus{n,k} \Vv{n}(\alpha) &= \bigg( (1+K_{\rm{p},\it{k}} \alpha) \Pp{k,0} - j \Qq{k}(\alpha) \bigg) \Ww{k}^\star(\alpha^\star) \label{eq:para_PQ_PV} \\
		\Vv{k}(\alpha) \Vv{k}^\star(\alpha^\star) &= \Vv{k,\rm{m}}^2 + K_{\rm{v},\it{k}} \alpha \label{eq:para_V_PV} \\
		\Vv{k}(\alpha) \Ww{k}(\alpha) &= 1 \label{eq:para_VW_PV}
	\end{align} \label{eq:para_PV_bus}%
\end{subequations}%
where $\Vv{k,\rm{m}} \in \Rc$ is the fixed voltage magnitude at bus $k$; $K_{\rm{v},\it{k}}$ is obtained from the corresponding entry of $\mathscr{E}^{-1} \mathbf{e}_l$.  

By the holomorphic structure, we represent parameterized unknowns $\Vv{n}(\alpha)$, $\Ww{k}(\alpha)$, and $\Qq{k}(\alpha)$ through their power series. Then, \eqref{eq:para_PV_bus} are re-written as
\begin{subequations}
	\begin{align}
		\sum_{n=1}^{N_{\rm{bus}}} \bigg( \Ybus{n,k} \sum_{i=0}^{\infty} \scrv{n,i} \alpha^i \bigg) &= \bigg( (1+K_{\rm{p},\it{k}} \alpha) \Pp{k,0}  \nonumber
		\\ & - j \sum_{i=0}^{\infty} q_{k,i} \alpha^i \bigg) \sum_{i=0}^{\infty} \scrw{n,i}^\star \alpha^i \label{eq:para_PQ_PV_Taylor} \\
		\sum_{i=0}^{\infty} \scrv{n,i} \alpha^i \sum_{i=0}^{\infty} \scrv{n,i}^\star \alpha^i &= \Vv{k,\rm{m}}^2 + K_{\rm{v},\it{k}} \alpha \label{eq:para_V_PV_Taylor} \\
		\sum_{i=0}^{\infty} \scrv{n,i} \alpha^i \sum_{i=0}^{\infty} \scrw{n,i} \alpha^i &= 1 \label{eq:para_VW_PV_Taylor}
	\end{align} \label{eq:para_PV_bus_Taylor}%
\end{subequations}%
where $q_{k,i}$'s are the power series coefficients of $\Qq{k}(\alpha)$.

Matching up coefficients for every monomial of $\alpha$ in \eqref{eq:para_PQ_PV_Taylor}, \eqref{eq:para_V_PV_Taylor} and \eqref{eq:para_VW_PV_Taylor} we can solve $u_i$, $w_i$, and $q_i$ as well.

\subsubsection{Slack Bus Embedding}
Consider the slack bus voltage magnitude equation \eqref{eq:Vslack_eqt}. Its holomorphic embedded equation is
\begin{equation}
	\Vv{\rm{s}}(\alpha) \Vv{\rm{s}}^\star (\alpha^\star) = \Vv{\rm{s},m}^2+K_{\rm{s}} \alpha \label{eq:slack_embed}
\end{equation}%
where $\Vv{\rm{s},m}$ is the slack bus voltage magnitude, $K_{\rm{s}}$ is the corresponding entry from $\mathscr{E}^{-1} \mathbf{e}_l$.

Substituting the power series of $\Vv{\rm{s}}(\alpha)$ into Equation~\eqref{eq:slack_embed} and matching up each monomial of $\alpha$ we have
\begin{subequations}
	\begin{align}
	\scrv{\rm{s},\it{i}} &=-\bigg( \sum_{n=1}^{i-1} \scrv{\rm{s},\it{n}} \scrv{\rm{s},\it{i-n}} \bigg)/(2 \scrv{\rm{s},0})~~\text{for}~~i \ge 2
	\\ \scrv{\rm{s},1} &=K_{\rm{s}}/(2 \scrv{\rm{s},0}) \label{eq:slack_recursive}
	\end{align}
\end{subequations}

Combining the corresponding equations from the PQ bus, PV bus and slack bus equations we finally solve the power series coefficients for each degree-$i$. In practice, every degree requires solving a real-valued linear system (sparse) with its size $(4N_{\rm{bus}}+N_{\rm{gen}}-3) \times (4N_{\rm{bus}}+N_{\rm{gen}}-3)$ where $N_{\rm{gen}}$ is the number of PV nodes. As $i$ goes to infinity, the power series converges to the actual curve in the convergence range. To compromise accuracy and speed, we usually stop at a given maximum degree $i_{\rm{max}}$\footnote{\cite{trias2012:1stHEM} claims that degree $i$ will deplete double precision digits after $60$. How to choose an appropriate $i_{\rm{max}}$ is beyond the scope of this paper. We choose $i_{\rm{max}}=15$ in our numerical experiments by empirical experience considering speed and accuracy.}.

\subsection{Pad\'{e} Approximation}
The above subsection shows that each node voltage (as well as reactive power at PV bus) can be embedded as a holomorphic function, and demonstrates a recursive way to obtain the coefficients. In practice the holomorphic function can only be evaluated by a finite sequence of power series. Thus, the accuracy of the sequence deteriorates when approaching the singularities of the holomorphic function. To achieve a better convergence performance and to predict the location of singular point, we further compute the Pad\'{e} approximation. It approximates the holomorphic function by a rational function in which the numerator and denominator are polynomials. 
According to \cite{stahl1989:convergence,stahl1997:convergence}, the Pad\'{e} approximation has the maximum convergent domain if the degrees of its numerator and denominator have the minimum difference. It provides a criterion for determining the best degree(s) that should be chosen.

Consider an embedded voltage variable $\scrv{k}(\alpha)$ for some $k$. Suppose its first $N$ coefficients are known.
\begin{equation}
	\scrv{k}(\alpha) = \sum_{n=0}^{\infty} \scrv{k,n} \alpha^n \approx \sum_{n=0}^{N} \scrv{k,n} \alpha^n
\end{equation}
Let its Pad\'{e} approximation be
\begin{equation}
		\sum_{n=0}^{N} \scrv{k,n} \alpha^n 
		= \sum_{n=0}^{N_{\rm{n}}} u_{k,n} \alpha^n / \sum_{n=0}^{N_{\rm{d}}} l_{k,n} \alpha^n%
\end{equation}%
where we specify $N_{\rm{n}}+N_{\rm{d}}=N$, $N_{\rm{n}} \ge N_{\rm{d}}$, and $N_{\rm{n}}-N_{\rm{d}} \le 1$.

To reach a unique coefficient set, let $l_{k,0}=1$. Matching up the coefficients for each monomial we can solve $u_{k,n}$'s and $l_{k,n}$'s in a $(N+1) \times (N+1)$ complex-valued sparse linear system. If we compute the power series to the maximum degree $i_{\rm{max}}$, the system size in the real space is $2(i_{\rm{max}}+1) \times 2(i_{\rm{max}}+1)$.

Once the Pad\'{e} approximation has been calculated, we can move along the parameterized curve by evaluating Pad\'{e} approximated values until a power mismatch threshold\footnote{In our numerical experiments, this threshold is set at $10^{-3}$ p.u.} has been reached. We can also compute the real-valued zeros to the denominator function of Pad\'{e}. These zeros reveal the locations of singularities on the parameterized curve, which can further assist us designing appropriate arc length for passing through these singular points by the traditional predictor-corrector algorithm. Next section will discuss these designs in detail.

\section{Holomorphic Embedding Based Continuation Method}
\label{sec:alg}
The proposed HEBC method can be divided into an outer loop part and an inner loop part. The outer loop focuses on new solution updates and sequential curve designs; while the inner loop primarily follows the curve fed by the outer loop and returns the solution set found on that curve. 

\subsection{Outer Loop for Solution Search}
To make this article self-sustained, we briefly explain the search strategies in the outer loop and summarize it in Algorithm~\ref{alg:outer_pf_search}. Interested readers can refer to \cite{lesieutre2015:efficient}.

\begin{algorithm}[tbhp]
	\caption{Outer Loop for Locating Power Flow Solutions}
	\label{alg:outer_pf_search}
	\begin{algorithmic}[1]
		\State Solving for a power flow solution $x_1$.
		\State Generating elliptical mapping $\mathscr{E}$ by algorithms in \cite{lesieutre2015:efficient,wu2017:algebraic}.
		\State $S \gets x_1$ \Comment{Initialize solution set}
		\State  $N_{\rm{solu}} \gets |S|$ \Comment{Initialize number of solutions}
		\State  $k \gets 0$ \Comment{Initialize counting number}
		\While{$k \neq N_{\rm{solu}}$}
		\State $k \gets k+1$ \Comment{Update counting number}
		\State $x_0\gets x_k$ \Comment{Update starting solution}
		\For{$l=1,~2,~\cdots,~N_{\rm{eqn}}$}
		\State Compute $\mathscr{E}^{-1} \mathbf{e}_l$ \Comment{Equivalent curve design}
		\State Algorithm~\ref{alg:HEBC} \Comment{HEBC Algorithm}
		\State{Return $S_{\rm{new}}$} \Comment{Return newly found solutions}
		\If{$S_{new}$ is not in $S$}
			\State  $S \gets S \cup S_{\rm{new}}$ \Comment{Update the solution set}
			\State  $N_{\rm{solu}}\gets |S|$ \Comment{Update the number of solutions}
		\EndIf
		\EndFor
		\EndWhile
	\end{algorithmic}
\end{algorithm}

We start Algorithm~\ref{alg:outer_pf_search} with a known solution $x_1$ which can be solved by Newton's method or other techniques\footnote{This step relies on the past extensive research of solving a high voltage solution to the power flow problem. Many mature solvers are able to do this job for very large systems.}. After several initialization steps, designing the curve $\{PF-\alpha\mathscr{E}^{-1} \mathbf{e}_l\}$ which is equivalent to $\{EF_{l-}\}$ for $l$. Following the curve from $l=1$ to the last one by Algorithm~\ref{alg:HEBC} (which will be discussed shortly below) and collect new solutions. When finished tracing curves, assigning the starting point $x_0$ to a newly found solution, say, $x_2$, and repeating the procedure. The whole loop terminates upon every solution having been assigned to a starting point. 

Algorithm~\ref{alg:outer_pf_search} presents a procedure to follow each curve sequentially. However, the curve designs at the same starting solution are independent with each other, suggesting a parallel computing framework to simultaneously trace these curves. The parallel computing is not performed in this article, but can be done with ease and increase speed drastically.
%

\subsection{Inner Loop for Curve Tracing}
Instead of tracing a curve by the traditional predictor-corrector algorithm, we apply the holomorphic embedding technique to quickly pass through the regular curve segments. The predictor-corrector algorithm is only executed for traveling across singularities. It is switched back to the holomorphic embedding as soon as current steps leave a singular point.

Figure~\ref{fig:pade1} shows four holomorphic steps on a selected curve from a 5-bus case \cite{salam1989:parallel}. They reach the singular point very quickly. On the other hand, the blue curve in Figure~\ref{fig:pade1} was generated by the traditional predictor-corrector algorithm. It took dozens of steps to reach the same singularity. 

\begin{algorithm}[tbhp]
	\caption{Holomorphihc Embedding Based Continuation}
	\label{alg:HEBC}
	\begin{algorithmic}[1]
		\State{Input selected curve $\mathscr{E}^{-1}\mathbf{e}_l$.}
		\State{Initialize the $1st$ step.}
		\For{$k=1:M$}
			\For{$k_h=1:N_h$}
				\State{Initialize the $1st$ holomorphic step size $\delta_h$.}
				\State{Prepare parameters for holomorphic embedding.}
				\State{Compute power series of holomorphic embedding.}
				\State{Compute Pad\'{e} approximation.}
				\parState{Evaluate voltage values from Pad\'{e} and update $\alpha_{k_h+1}$.}
				\parState{Evaluate power mismatch $dP_{\rm{mis}}$ from computed voltages.}
				\While{minimum pole $p_{\rm{min}}$ is not determined}
					\State{Compute roots $\{ \zeta_i \}$ from Pad\'{e} denominator.}
					\parState{$p_{\rm{min}} \gets \zeta_{\rm{min}}$ if the minimum real root $\zeta_{\rm{min}}$ has correct sign.}
				\EndWhile
				\parState{Increase $\delta_h$ while $dP_{\rm{mis}}< dP_{\rm{max}}$ and $|\text{current point}|< |p_{\rm{min}}|$.}
				\parState{Decrease $\delta_h$ while $dP_{\rm{mis}}\ge dP_{\rm{max}}$ or $|\text{current point}|\ge |p_{\rm{min}}|$.}
				\parState{Correct current holomorphic predicted point by Newton's method.}
				\If{Correction succeeds}
					\State{Record current point.}
				\Else
					\parState{Delete current point and compute a starter for switching algorithm.}
					\State{Break.}
				\EndIf
				\If{$\alpha_{k_h+1}\alpha_{k_h}<0$}
					\State{Find a solution nearby.}
					\If{Fail to locate the solution}
						\parState{Delete current point and compute a cold starter for switching algorithm.}
						\State{Break.}
					\Else
						\State{Record solution to solution set $S_{\rm{new}}$.}
						\parState{Check completeness of the curve; jump out Algorithm~\ref{alg:HEBC} if completed.}
					\EndIf
				\EndIf
				\If{$|\alpha_{k_h+1}-\alpha_{k_h}|<d\alpha_{h,\rm{min}}$}
					\State{Compute a starter for switching algorithm.}
					\State{Break.}
				\EndIf		
			\EndFor
			\State{Execute predictor-corrector routine.}
		\EndFor
		\State{Return solution set $S_{\rm{new}}$}
	\end{algorithmic}
\end{algorithm}

\begin{figure}[tbhp]
	\centering
	\subfigure[Holomorphic Steps ]{\label{fig:pade1}\includegraphics[width=0.48\columnwidth]{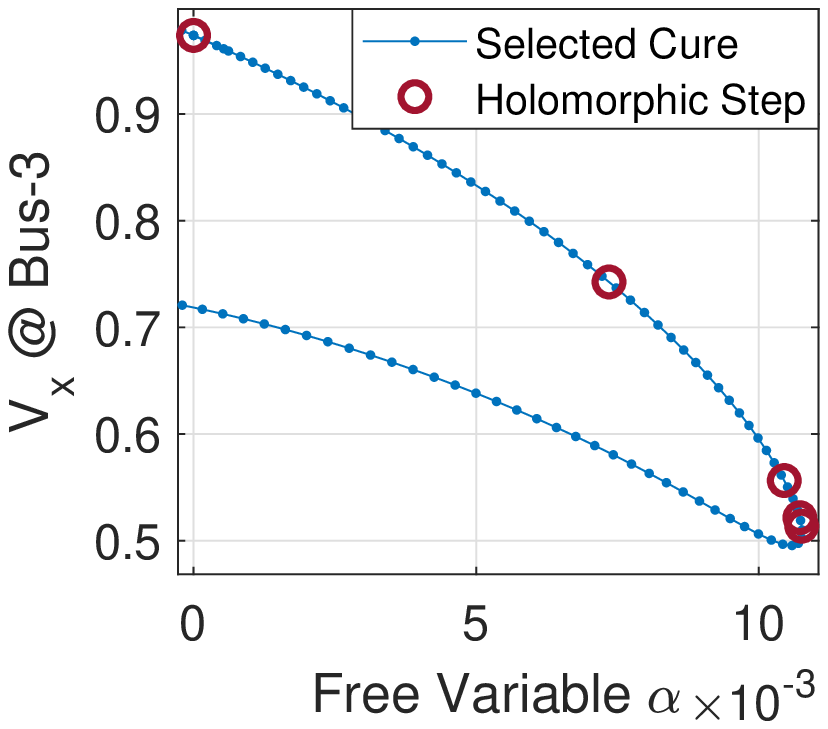}}~~
	\subfigure[Predictor-Corrector Steps ]{\label{fig:pade2}\includegraphics[width=0.48\columnwidth]{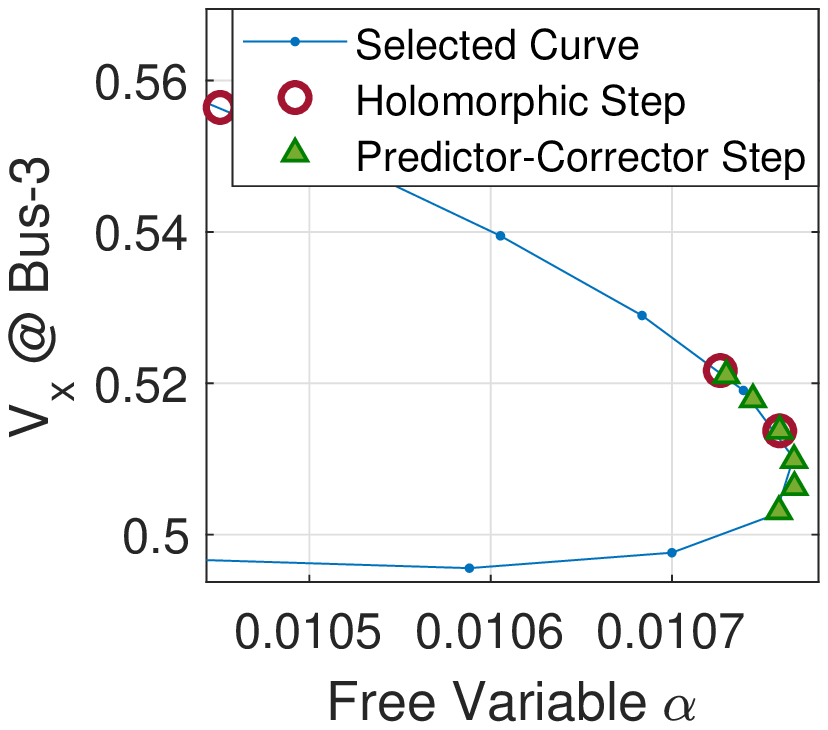}}
	\caption{Holomorphic Steps and Preditor-Corrector Steps} \label{fig:pade}
\end{figure}



\subsubsection{Criterion to Enter Predictor-Corrector Routine}
Two indicators are considered to trigger the switch of the algorithms in Algorithm~\ref{alg:HEBC}. The first indicator appears when the corrector steps fail to make the holomorphic prediction converge within a certain number of iterations. 
Another indicator comes when $|\alpha_{k_h+1}-\alpha_{k_h}|$ is smaller than a threshold value $d\alpha_{h,\rm{min}}$. Both suggest that current holomorphic step is close to singular (or at least badly scaled with respect to $\alpha$). 

\subsubsection{Using A Warm Starter to Accelerate Predictor-Corrector Steps}
One can initiate the predictor-corrector routine from a minimum step size, and increase it gradually. We refer it to a \emph{cold starter}. 
To avoid slow ``warming up'' steps, a warm starter is proposed and implemented. It relies on an estimated distance $d_{\rm{hp}}$ from the singular point to the last holomorphic point.
We specifically choose the initial step interval $S_{\rm{pc}}$ to be $1/5$ of the estimated distance $d_{\rm{hp}}$ and to be no greater than $0.45$ of the last holomorphic step size. Then, using $S_{\rm{pc}}$ to compute two backward steps to initiate a quadratic predictor. For example, the first two green triangles on the upper curve segment in Figure~\ref{fig:pade2} are the backward points evaluated by Pad\'{e} approximation at the step length $S_{\rm{pc}}$. It makes the predictor-corrector routine quickly pass through the singular point as shown by the rest green triangles.

\subsubsection{Criterion to Exit Predictor-Corrector Routine}
When travelling across a singular point, the direction of curve changes. Numerically, there exists a particular step $m_c$ such that
	$(\alpha_{m_c}-\alpha_{m_c-1})(\alpha_{m_c+1}-\alpha_{m_c})<0$.
After this moment, we continue the predictor-corrector routine for a while until the curve's slope value returns from infinity back to a tractable value. Instead of evaluating the actual slope of the curve, we monitor the maximum variable secant slope $R_{\rm{m}}$. 
\begin{equation}
	R_{\rm{m}} \mydef max\{ |(\Vv{k,m}-\Vv{k,m-1})/(\alpha_{m}-\alpha_{m-1})|,~\forall k \}
\end{equation}
As long as $R_{\rm{m}}$ drops to a threshold $R_{\rm{max}}$, say, $2 \times 10^4$, we jump out of the predictor-corrector routine and start a new sequence of holomorphic steps.

\section{Computational Complexity Comparison}
\label{sec:comp}

The holomorphic prediction consists of two sub-routines: 1) construct the power series; 2) compute Pad\'{e} approximation based on the power series. Both sub-routines require solving sparse linear systems. The sparsity reduces computational efforts in practice but makes analysis hard. To get a rough idea of the complexity, we simply assume the matrices are dense in the analysis, but solve them in sparse form practically.

In Section~\ref{sec:holo} computing the power series coefficients requires solving a sequence of linear systems up to the highest degree $i_{\rm{max}}$. A favorable observation is that all these linear systems share the same constant matrix. Thus, the LU factorization only needs to be performed once, while forward and backward substitutions need to be performed $i_{\rm{max}}$ times to generate coefficients for all degrees. Therefore, the computational complexity for the power series is 
\begin{equation}
	C_{\rm{Tl}} = \frac{2}{3}\big(4 N_{\rm{bus}} + N_{\rm{gen}} -3\big)^3 + 2 \big(4 N_{\rm{bus}} + N_{\rm{gen}} -3\big)^2 i_{\rm{max}} \label{eq:cplx_Taylor}
\end{equation}

In Pad\'{e} approximation, the complexity is
\begin{equation}
	C_{\rm{Pd}} = \bigg( \frac{2}{3}\big(2 i_{\rm{max}} +2\big)^3 + 2 \big(2 i_{\rm{max}} +2\big)^2 \bigg)  \big(2 N_{\rm{bus}} -1\big) \label{eq:cplx_Pade}
\end{equation}%
The total complexity of a holomorphic prediction is
$C_{\rm{Holo}} = C_{\rm{Tl}} + C_{\rm{Pd}}$.

On the other hand, in the traditional predictor-corrector algorithm the Newton's iterations in correctors are the most computational complex part. Again, suppose a dense Jacobian matrix (sparse in practice) the complexity of solving one Newton's iteration is
\begin{equation}
	C_{\rm{Newton}} = \frac{2}{3}\big(2 N_{\rm{bus}} -1)^3 + 2 \big(2 N_{\rm{bus}} -1)^2 \label{eq:cplx_newton}
\end{equation}

Suppose $i_{max}$ is fixed, $\Ng = 0.2 \Nb$\footnote{The number of PV buses usually occupies a small fraction of the total number of buses.}, and each corrector takes 3 Newton's iterations to converge for both the holomorphic step and the traditional predictor-corrector step, we have
\begin{equation}
	R = \lim_{N_{\rm{bus}} \to \infty} \frac{C_{\rm{Holo}}+3C_{\rm{Newton}}}{3C_{\rm{Newton}}} = 4.087%
\end{equation}
It suggests that one holomorphic step takes about four predictor-corrector steps computations asymptotically with the dense matrix LU factorization. So an average holomorphic step size which is greater than 4 times the average step size of the predictor-corrector algorithm can potentially reduce the computational time under the same assumptions. 

\section{Numerical Experiments}
\label{sec:num}
This section presents a comprehensive numerical evaluation of the proposed HEBC method on several standard power system test cases including ``case3TS", ``case3'', ``case4gs'',``case4BBc'', ``case4BB0'', ``case5Salam'', ``case6ww'', ``case7Salam'', ``case9'', ``case14'', ``case30'', ``case33bw'', ``case39'', ``case57''\footnote{Tap ratios are removed in this case to reduce the number of solutions.}, which can be found in the Matpower libary \cite{Zimmerman}, and ``case5loop'' \cite{molzahn2013:counterexample}. To avoid numerical instability and structurally unstable solutions, small resistance at $10^{-4}$ p.u. is added to lossless lines. The HEBC method and the full predictor-corrector method are coded in Matlab R2017b and executed on a PC with $2.8$GHz Intel i7-7700HQ CPU and $16$GB RAM.

\subsection{Comparison To Homotopy Continuation Method}
To demonstrate the superiority of computational efficiency in finding multiple power flow solutions, we begin with a comparison of the proposed HEBC method to the homotopy continuation method. The homotopy continuation is performed by the PHCpack \cite{PHCpack}. 

The HEBC method finds all the actual power flow solutions in this comparison as well as case14\footnote{No existing literature claims complete solution sets for larger IEEE test cases.}. Figure~\ref{fig:phc} shows execution time (in logarithmic scale) comparison between two methods. For test cases smaller than 5 buses, the PHCpack runs faster than the proposed HEBC method. However, for cases more than 5 buses, the HEBC outperforms the homotopy continuation method substantially. Considering the HEBC method is coded in Matlab and is not optimized to reach the most computational performance, the time reductions from HEBC are impressive. Test cases larger than 9 buses cannot be solved by PHCpack within 24 hours, thus are not considered in this comparison\footnote{A more recent progress in \cite{chen2018:network} successfully reduced the computational time of case14 to 5 minutes, however, the proposed HEBC is still much faster.}.

\begin{figure}[tbhp]
	\centering
	\includegraphics[width=0.95\columnwidth]{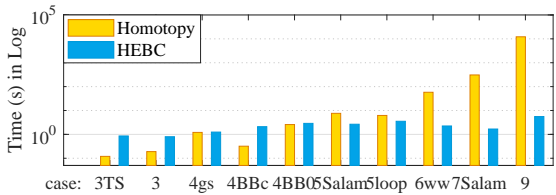}
	\caption{Comparison Between Homotopy Continuation and HEBC} \label{fig:phc}%
\end{figure}


\subsection{Comparison To Full Predictor-Corrector Algorithm}
In this part, we testify the traditional full predictor-corrector method from \cite{lesieutre2015:efficient} and the proposed HEBC method on the same set of test cases, and compare their numerical performances. Both methods provide the same solution sets for all cases, but the HEBC method is more efficient than the traditional predictor-corrector method.
Some hard\footnote{A curve is hard to follow in the sense that it contains too many singularities or some singular points are very sharp when turning directions.} sample curves are presented in Appendix~\ref{sec:append_1}. 
One can see from the left plots of Figure~\ref{fig:curves} that the traditional full predictor-corrector method, though with quadratic predictor and automatic step length adaption, takes very dense points to trace curves. On the other hand, the right plots of Figure~\ref{fig:curves} are primarily sparse. Small dense point periods only occur around singularities when HEBC switches to the predictor-corrector routine for passing through those singularities.
Summaries of the numerical results are collected in Table~\ref{table:1} and \ref{table:2}. 

\begin{table}[tbhp]
	{\footnotesize
		\caption{Numerical Results by Predictor-Corrector Method}\label{table:1}
		\begin{center}
			\begin{tabular}{l|c|c|c}
				\hline
				\multicolumn{1}{c|}{Method} & \multicolumn{2}{c|}{Predictor-Corrector} & \multicolumn{1}{l}{} \\ \hline
				\multicolumn{1}{l|}{Case}   & overall steps     & overall time (s)     & \# Solutions         \\ \hline
				3TS                     & 2962              & 1.257                & 6                    \\ \hline
				3                       & 3349              & 1.143                & 6                    \\ \hline
				4gs                     & 4281              & 1.405                & 6                    \\ \hline
				4BBc                    & 8838              & 2.660                & 12                   \\ \hline
				4BB0                    & 13791             & 3.719                & 14                   \\ \hline
				5Salam                  & 11465             & 3.626                & 10                   \\ \hline
				5loop                   & 17049             & 4.568                & 10                   \\ \hline
				6ww                     & 9421              & 3.209                & 6                    \\ \hline
				7Salam                  & 5195              & 1.978                & 4                    \\ \hline
				9                       & 22264             & 10.945               & 8                    \\ \hline
				14                      & 151423            & 102.401              & 30                   \\ \hline
				30                      & 5358518           & 6054.987             & 472                  \\ \hline
				33bw                    & 311957            & 249.736              & 16                   \\ \hline
				39                      & 3009935           & 3758.195             & 176                  \\ \hline
				57                      & 14647351          & 23864.005            & 606                  \\ \hline
			\end{tabular}
		\end{center}
	}
\end{table}

\begin{table}[tbhp]
	\begin{threeparttable}
	{\footnotesize
		\caption{Numerical Results by HEBC Method}	 \label{table:2}
			\begin{tabular}{l|c|c|c|c|c}
				\hline
				\multicolumn{1}{c|}{Method}  & \multicolumn{5}{c}{HEBC}                                                                                                                                                                                                                  \\ \hline
				\multicolumn{1}{c|}{Routine} & \multicolumn{2}{c|}{Holomorphic}         & \multicolumn{2}{c|}{Predictor-Corrector}  & \multirow{2}{*}{\begin{tabular}[c]{@{}c@{}}overall\\ time (s)\end{tabular}} \\ \cline{1-5}
				Case    & \# steps & time (s) & \# step            & time (s)            &                                                                             \\ \hline
				3TS                      & 253                           & 0.373    & 228                & 0.133               & 0.859                                                                       \\ \hline
				3                        & 290                           & 0.345    & 387                & 0.155               & 0.798                                                                       \\ \hline
				4gs                      & 481                           & 0.625    & 766                & 0.197               & 1.248                                                                       \\ \hline
				4BBc                     & 850                           & 1.174    & 1358               & 0.362               & 2.102                                                                       \\ \hline
				4BB0                     & 1212                          & 1.489    & 2805               & 0.773               & 2.890                                                                       \\ \hline
				5Salam                   & 1128                          & 1.754    & 1224               & 0.290               & 2.661                                                                       \\ \hline
				5loop                    & 1695                          & 2.448    & 1627               & 0.358               & 3.537                                                                       \\ \hline
				6ww                      & 995                           & 1.402    & 1143               & 0.288               & 2.249                                                                       \\ \hline
				7Salam                   & 564                           & 1.068    & 362                & 0.132               & 1.676                                                                       \\ \hline
				9                        & 1668                          & 3.158    & 4026               & 1.668               & 5.572                                                                       \\ \hline
				14                       & 13350                         & 34.443   & 27238              & 12.784              & 50.013                                                                      \\ \hline
				30                       & 403181                        & 2077.966 & 910664             & 674.828             & 2813.249                                                                    \\ \hline
				33bw                     & 15904                         & 81.896   & 65351              & 43.519              & 129.323                                                                     \\ \hline
				39                       & 184458                        & 1247.794 & 1044166            & 930.796             & 2204.543                                                                    \\ \hline
				57                       & 835550                        & 10565.59 & 3078609            & 3598.361            & 14304.691                                                                   \\ \hline
			\end{tabular}
		\begin{tablenotes}
			\small
			\item HEBC provides the same solution sets for all the cases as in Table I.
		\end{tablenotes}
	}
	\end{threeparttable}
\end{table}
\begin{figure}[tbhp]
	\centering
	\includegraphics[width=1\columnwidth]{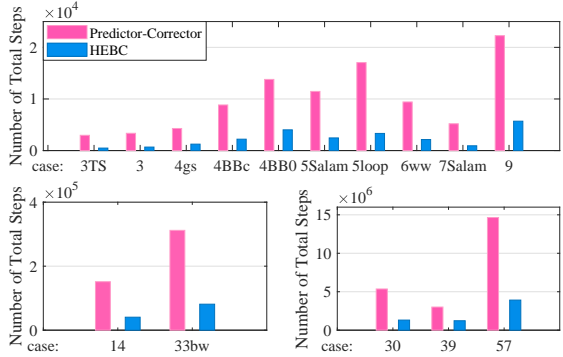} 
	\caption{Steps Needed for Different Cases with Full Predictor-Corrector and HEBC} \label{fig:num_step}
\end{figure}

Comparing the results in Figure~\ref{fig:num_step}, the total number of steps for HEBC is about $1/6$ to $1/3$ of the total number of steps for the full predictor-corrector method. This ratio, not surprisingly, should depend on the problem structure. In general, fewer singularities and longer horizontal curve segments favor the HEBC more. 

\begin{figure}[tbhp]
	\centering
	\includegraphics[width=0.95\columnwidth]{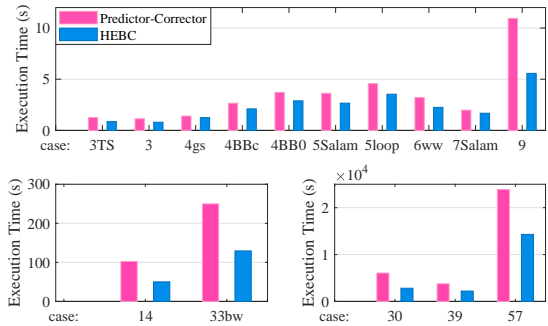} 
	\caption{Execution Times with Full Predictor-Corrector and HEBC}\label{fig:execution_time}
\end{figure}

To reveal the efficiency of HEBC, we compute the equivalent number of predictor-corrector steps $N_{\rm{eqv}}$
\begin{equation}
	N_{\rm{eqv}} \mydef (N_{\rm{pc}}-N_{\rm{he,pc}})/N_{\rm{he,holo}}
\end{equation}%
where $N_{\rm{pc}}$ is the number of full predictor-corrector steps; $N_{\rm{he,pc}}$ is the number of predictor-corrector routine steps in HEBC; and $N_{\rm{he,holo}}$ is the number of holomorphic routine steps in HEBC. 
From Table~\ref{table:1} and \ref{table:2} we calculate that one holomorphic step on average can represent $8.5$ predictor-corrector steps, with the worst case of $7$ steps and the best case of $15$ steps. 
In Figure~\ref{fig:execution_time} the first $9$ small cases up to case7Salam show a limited time saving by HEBC. However, starting at case9 the HEBC method outperforms the full predictor-corrector method by up to $50 \%$ of the execution time. Larger cases also exhibit at least $30 \%$ time saving in the lower plots of Figure~\ref{fig:execution_time}.

\subsection{Average Number of Steps on Each Dimension}
Recall that the HEBC method calls the Newton's method at each step to correct the predicted point.
These predicted points are sequentially determined over the curve tracing process. Thus, the HEBC method can be regarded as a systematic way to choose initial points for solving the power flow equations, where the number of initial points equals the number of steps in Table~\ref{table:2}, i.e. the sum of entries in the second and forth columns for each case.
From this point of view, one can assess the efficiency of HEBC by computing the average number of initial points (steps) allocated in each dimension
\begin{equation}
R_{eq} \mydef N^{1/d}%
\end{equation}%
where $N$ is the total number of initial points, $d$ is the dimensionality of the problem. $R_{eq}$ represents the number of points required in each single dimension such that the total number of initial points composed by their direct combinations achieves the same amount of initial points $N$ for the whole $d$-dimensional problem. Specifically for our problem, $R_{eq}$ is computed as
\begin{equation}
R_{eq} = (N_{he,pc}+N_{he,holo})^{1/(2\Nb -1)}
\end{equation}

\begin{figure}[tbhp]
	\centering
	\includegraphics[width=0.95\columnwidth]{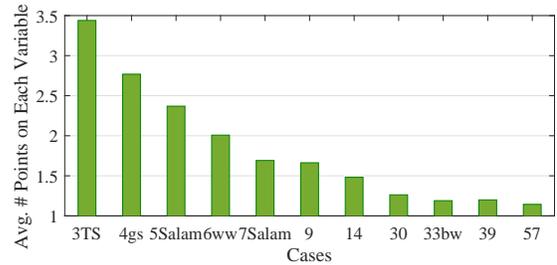} 
	\caption{Equivalent No. Random Seeds for Each Variable}\label{fig:avg_rand}
\end{figure}

Figure~\ref{fig:avg_rand} depicts the trend of $R_{eq}$ as system size increases. One can clearly see that the average number of steps distributed on each dimension decreases to nearly $1$. Hence, despite the increase of total number of steps, the average number of steps on each dimension seems to decrease in an asymptotic sense.

%
%


\section{Conclusions}
\label{sec:conclusions}
In this paper, we proposed an efficient hybrid method to solve multiple power flow solutions. We derived an equivalent curve design to the elliptical formulation of the power flow equations. Based on this design, a holomorphic embedding continuation method was introduced to replace the traditional predictor-corrector algorithm for regular curve tracing. Singular points were passed by the predictor-corrector routine. The complexity of one holomorphic step is around four times the complexity of a predictor-corrector step under certain assumptions. Numerical simulations showed that one holomorphic step size is equivalent to over eight predictor-corrector step size on average, and saved up to half of the computational time for some large test cases.

A possible future direction of research can use the proposed method to find multiple power flow solutions for dynamic stability analysis, especially in characterizing the stability boundary of an equilibrium point. Another interesting topic would be using this method for solving optimal power flow problems.

\bibliographystyle{ieeetr}
\bibliography{references}

\begin{thebibliography}{10}

\bibitem{pai2012:energy}
M.~Pai, {\em Energy function analysis for power system stability}.
\newblock Springer Science \& Business Media, 2012.

\bibitem{chiang2011:direct}
H.-D. Chiang, {\em Direct methods for stability analysis of electric power
  systems: theoretical foundation, BCU methodologies, and applications}.
\newblock John Wiley \& Sons, 2011.

\bibitem{sauer1998:power}
P.~W. Sauer and M.~A. Pai, {\em Power system dynamics and stability}, vol.~101.
\newblock Prentice hall Upper Saddle River, NJ, 1998.

\bibitem{klos1975:non}
A.~Klos, ``The non-uniqueness of load-flow solution,'' in {\em Proc. PSCC},
  vol.~3, 1975.

\bibitem{johnson1977:extraneous}
B.~Johnson, ``Extraneous and false load flow solutions,'' {\em IEEE
  Transactions on Power Apparatus and Systems}, vol.~96, no.~2, pp.~524--534,
  1977.

\bibitem{tamura1980multiple}
Y.~Tamura, Y.~Nakanishi, and S.~Iwamoto, ``On the multiple solution structure,
  singular point and existence condition of the multiple load-flow solutions,''
  {\em IEEE Transactions on Power Apparatus and Systems}, vol.~99, no.~4,
  pp.~1322--1322, 1980.

\bibitem{chiang1990:existence}
H.-D. Chiang and M.~E. Baran, ``On the existence and uniqueness of load flow
  solution for radial distribution power networks,'' {\em IEEE Transactions on
  Circuits and Systems}, vol.~37, no.~3, pp.~410--416, 1990.

\bibitem{korsak1972:question}
A.~J. Korsak, ``On the question of uniqueness of stable load-flow solutions,''
  {\em IEEE Transactions on Power Apparatus and Systems}, no.~3,
  pp.~1093--1100, 1972.

\bibitem{nguyen2014:appearance}
H.~D. Nguyen and K.~S. Turitsyn, ``Appearance of multiple stable load flow
  solutions under power flow reversal conditions,'' in {\em 2014 IEEE PES
  General Meeting| Conference \& Exposition}, pp.~1--5, IEEE, 2014.

\bibitem{geiri:pf}
W.~Feng, J.~Wu, C.~Yuan, R.~Liu, G. amd~Dai, Q.~Shi, and F.~Li, ``A graph
  computation based sequential power flow calculation for large-scale ac/dc
  systems,'' {\em arXiv}, 2019.

\bibitem{mehta2016:numerical}
D.~Mehta, H.~D. Nguyen, and K.~Turitsyn, ``Numerical polynomial homotopy
  continuation method to locate all the power flow solutions,'' {\em IET
  Generation, Transmission \& Distribution}, vol.~10, no.~12, pp.~2972--2980,
  2016.

\bibitem{tavora1972:equilibrium}
C.~J. Tavora and O.~J. Smith, ``Equilibrium analysis of power systems,'' {\em
  IEEE Transactions on Power Apparatus and Systems}, no.~3, pp.~1131--1137,
  1972.

\bibitem{salam1989:parallel}
F.~Salam, L.~Ni, S.~Guo, and X.~Sun, ``Parallel processing for the load flow of
  power systems: the approach and applications,'' in {\em Decision and Control,
  1989., Proceedings of the 28th IEEE Conference on}, pp.~2173--2178, IEEE,
  1989.

\bibitem{salam1989:parallel2}
F.~Salam, L.~Ni, X.~Sun, and S.~Guo, ``Parallel processing for the steady state
  solutions of large-scale non-linear models of power systems,'' in {\em
  Circuits and Systems, 1989., IEEE International Symposium on},
  pp.~1851--1854, IEEE, 1989.

\bibitem{baillieul1982:geometric}
J.~Baillieul and C.~Byrnes, ``Geometric critical point analysis of lossless
  power system models,'' {\em IEEE Transactions on Circuits and Systems},
  vol.~29, no.~11, pp.~724--737, 1982.

\bibitem{chen2018:network}
T.~Chen and D.~Mehta, ``On the network topology dependent solution count of the
  algebraic load flow equations,'' {\em IEEE Transactions on Power Systems},
  vol.~33, no.~2, pp.~1451--1460, 2018.

\bibitem{ma1993}
W.~Ma and J.~S. Thorp, ``An efficient algorithm to locate all the load flow
  solutions,'' {\em IEEE Transactions on Power Systems}, vol.~8, no.~3,
  pp.~1077--1083, 1993.

\bibitem{coss2018:locating}
O.~Coss, J.~D. Hauenstein, H.~Hong, and D.~K. Molzahn, ``Locating and counting
  equilibria of the kuramoto model with rank-one coupling,'' {\em SIAM Journal
  on Applied Algebra and Geometry}, vol.~2, no.~1, pp.~45--71, 2018.

\bibitem{molzahn2013:counterexample}
D.~K. Molzahn, B.~C. Lesieutre, and H.~Chen, ``Counterexample to a
  continuation-based algorithm for finding all power flow solutions,'' {\em
  IEEE Transactions on Power Systems}, vol.~28, no.~1, pp.~564--565, 2013.

\bibitem{lesieutre2015:efficient}
B.~Lesieutre and D.~Wu, ``An efficient method to locate all the load flow
  solutions-revisited,'' in {\em Communication, Control, and Computing
  (Allerton), 2015 53rd Annual Allerton Conference on}, pp.~381--388, IEEE,
  2015.

\bibitem{wu2017:algebraic}
D.~Wu, {\em Algebraic Set Preserving Mappings for Electric Power Grid Models
  and its Applications}.
\newblock The University of Wisconsin-Madison, 2017.

\bibitem{wu2018:deterministic}
D.~Wu, D.~K. Molzahn, B.~C. Lesieutre, and K.~Dvijotham, ``A deterministic
  method to identify multiple local extrema for the ac optimal power flow
  problem,'' {\em IEEE Transactions on Power Systems}, vol.~33, no.~1,
  pp.~654--668, 2018.

\bibitem{schwetlick1987:higher}
H.~Schwetlick and J.~Cleve, ``Higher order predictors and adaptive steplength
  control in path following algorithms,'' {\em SIAM journal on numerical
  analysis}, vol.~24, no.~6, pp.~1382--1393, 1987.

\bibitem{den1981:steplength}
C.~Den~Heijer and W.~Rheinboldt, ``On steplength algorithms for a class of
  continuation methods,'' {\em SIAM Journal on Numerical Analysis}, vol.~18,
  no.~5, pp.~925--948, 1981.

\bibitem{trias2012:1stHEM}
A.~TRIAS, ``The holomorphic embedding load flow method,'' in {\em IEEE PES
  General Meeting}, pp.~1--8, IEEE, 2012.

\bibitem{sauer1981:explicit}
P.~Sauer, ``Explicit load flow series and functions,'' {\em IEEE Transactions
  on power apparatus and systems}, no.~8, pp.~3754--3763, 1981.

\bibitem{xu1998:series}
W.~Xu, Y.~Liu, J.~Salmon, T.~Le, and G.~Chang, ``Series load flow: a novel
  noniterative load flow method,'' {\em IEE Proceedings-Generation,
  Transmission and Distribution}, vol.~145, no.~3, pp.~251--256, 1998.

\bibitem{de2007:non}
A.~Z. De~Souza, C.~R. Junior, B.~L. Lopes, R.~Leme, and O.~Carpinteiro,
  ``Non-iterative load-flow method as a tool for voltage stability studies,''
  {\em IET Generation, Transmission \& Distribution}, vol.~1, no.~3,
  pp.~499--505, 2007.

\bibitem{trias2016:dc}
A.~Trias and J.~L. Marín, ``The holomorphic embedding loadflow method for dc
  power systems and nonlinear dc circuits,'' {\em IEEE Transactions on Circuits
  and Systems I: Regular Papers}, vol.~63, no.~2, pp.~322--333, 2016.

\bibitem{asu2017:sdbifur}
S.~Rao, D.~J. Tylavsky, and Y.~Feng, ``Estimating the saddle-node bifurcation
  point of static power systems using the holomorphic embedding method,'' {\em
  International Journal of Electrical Power and Energy Systems}, vol.~84,
  no.~., pp.~1--12, 2017.

\bibitem{utk2018:hemplf}
C.~Liu, K.~Sun, B.~Wang, and W.~Ju, ``Probabilistic power flow analysis using
  multidimensional holomorphic embedding and generalized cumulants,'' {\em IEEE
  Transactions on Power Systems}, vol.~33, no.~6, pp.~7132--7142, 2018.

\bibitem{utk2019:hemmotor}
R.~Yao, K.~Sun, D.~Shi, and X.~Zhang, ``Voltage stability analysis of power
  systems with induction motors based on holomorphic embedding,'' {\em IEEE
  Transactions on Power Systems}, vol.~., no.~., p.~., accepted.

\bibitem{trias2015:reflection}
A.~TRIAS, ``Fundamentals of the holomorphic embedding load-flow method.'' arXiv
  preprint 1509.02421.

\bibitem{stahl1989:convergence}
H.~Stahl, ``On the convergence of generalized pad{\'e} approximants,'' {\em
  Constructive Approximation}, vol.~5, no.~1, pp.~221--240, 1989.

\bibitem{stahl1997:convergence}
H.~Stahl, ``The convergence of pad{\'e} approximants to functions with branch
  points,'' {\em Journal of Approximation Theory}, vol.~91, no.~2,
  pp.~139--204, 1997.

\bibitem{Zimmerman}
R.~D. Zimmerman and C.~E. Murillo-Sánchez, ``Matpower.'' PSERC @ Cornell.

\bibitem{PHCpack}
J.~Verschelde, ``Phcpack: a general-purpose solver for polynomial systems by
  homotopy continuation.'' University of Illinois at Chicago, 1999.

\end{thebibliography}

\section*{Acknowledgments}
We would like to acknowledge the helpful discussions with Dr. Wenqiang Feng from DST Systems, Dr. Honghao Zheng from Siemens, Prof. Konstantin Turitsyn at MIT and Prof. Bernard Lesieutre at UW-Madison. The authors gratefully acknowledge support from the National Science Foundation under grant CRISP 1735513.

\appendices
\clearpage
\onecolumn
\newpage
\section{Sample Curves}
\label{sec:append_1}
Sample curves from simulations.

\begin{figure}[!ht]
	\centering
	\subfigure[Case14 by Predictor-Corrector]{\label{fig:14pc}\includegraphics[width=0.39\columnwidth]{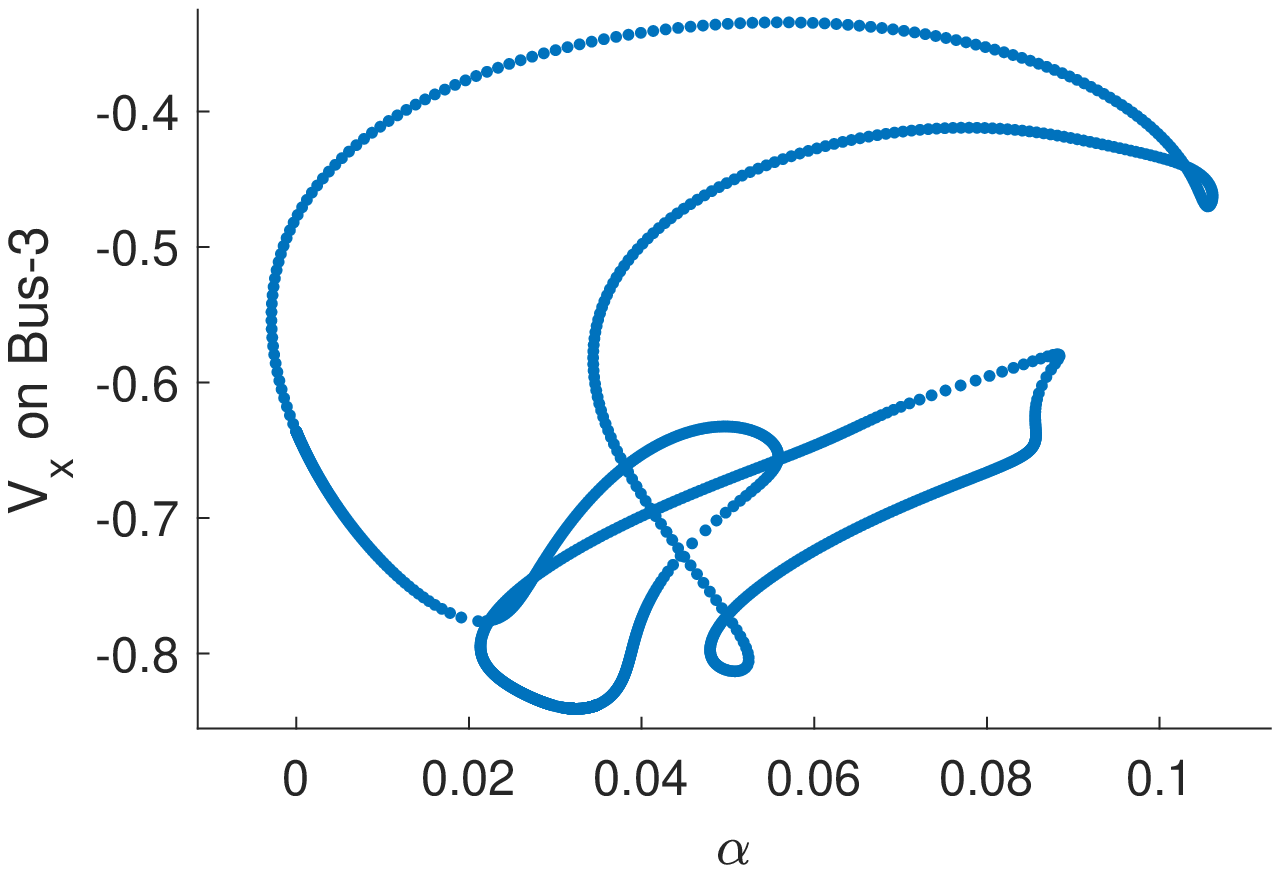}}~~~~
	\subfigure[Case14 by HEBC ]{\label{fig:14he}\includegraphics[width=0.39\columnwidth]{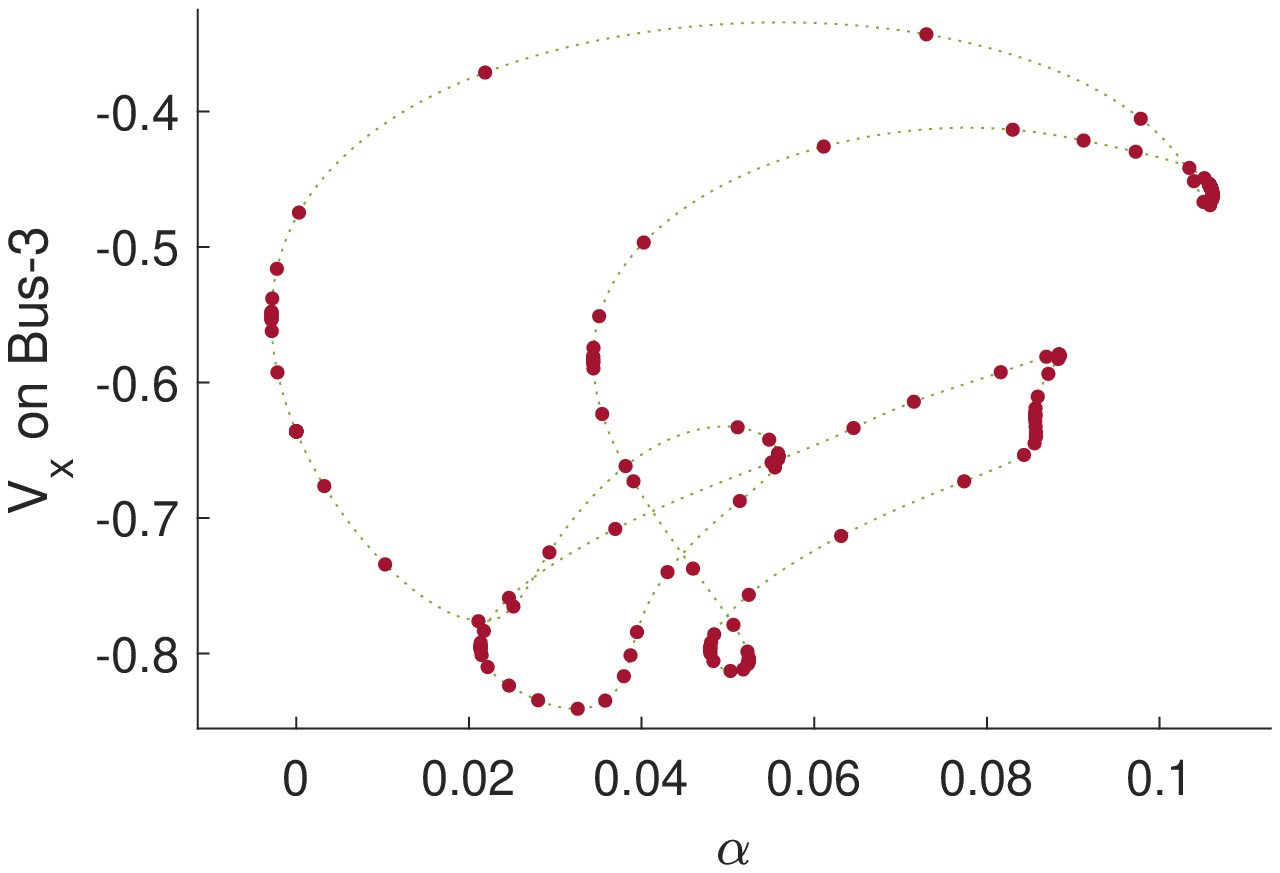}}\\
	\subfigure[Case30 by Predictor-Corrector]{\label{fig:30pc}\includegraphics[width=0.39\columnwidth]{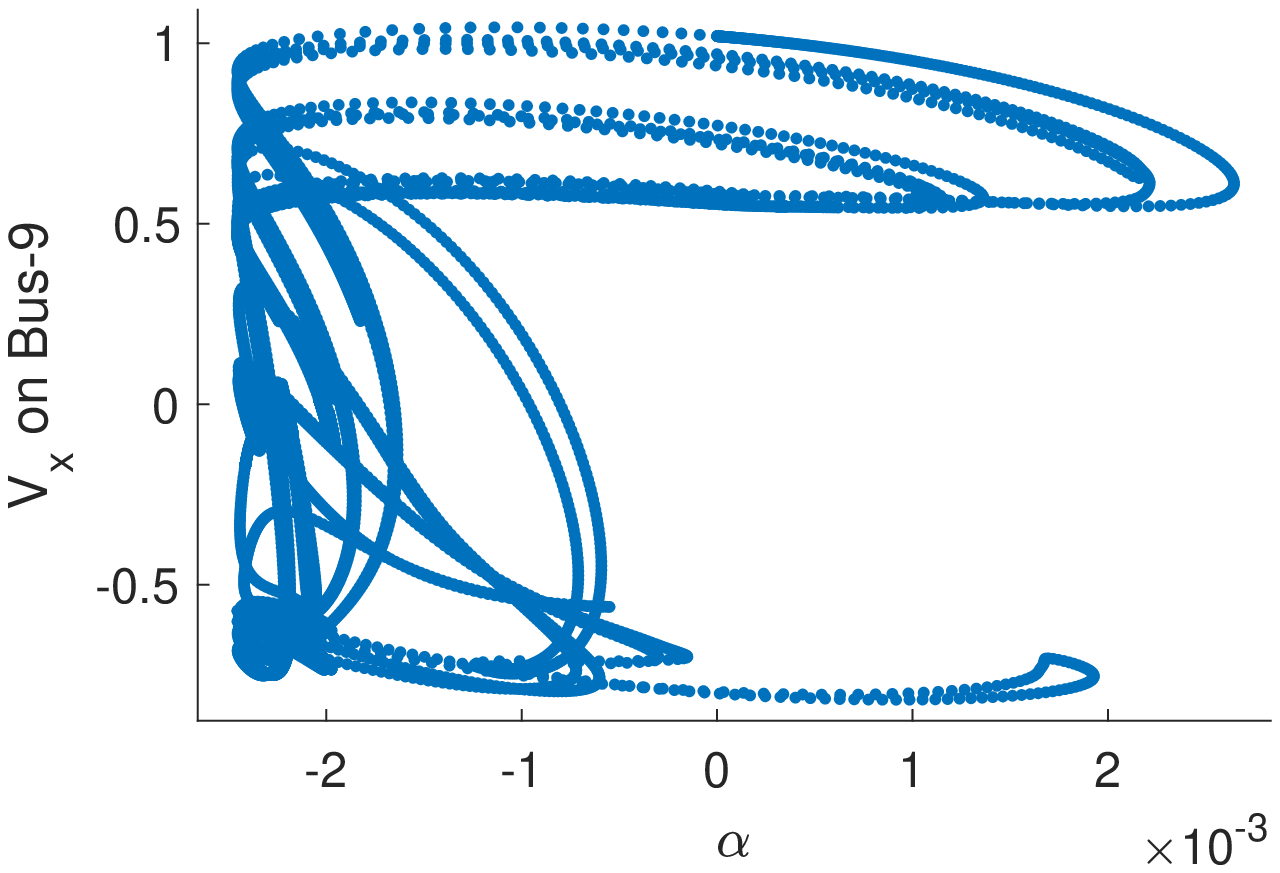}}~~~~
	\subfigure[Case30 by HEBC ]{\label{fig:30he}\includegraphics[width=0.39\columnwidth]{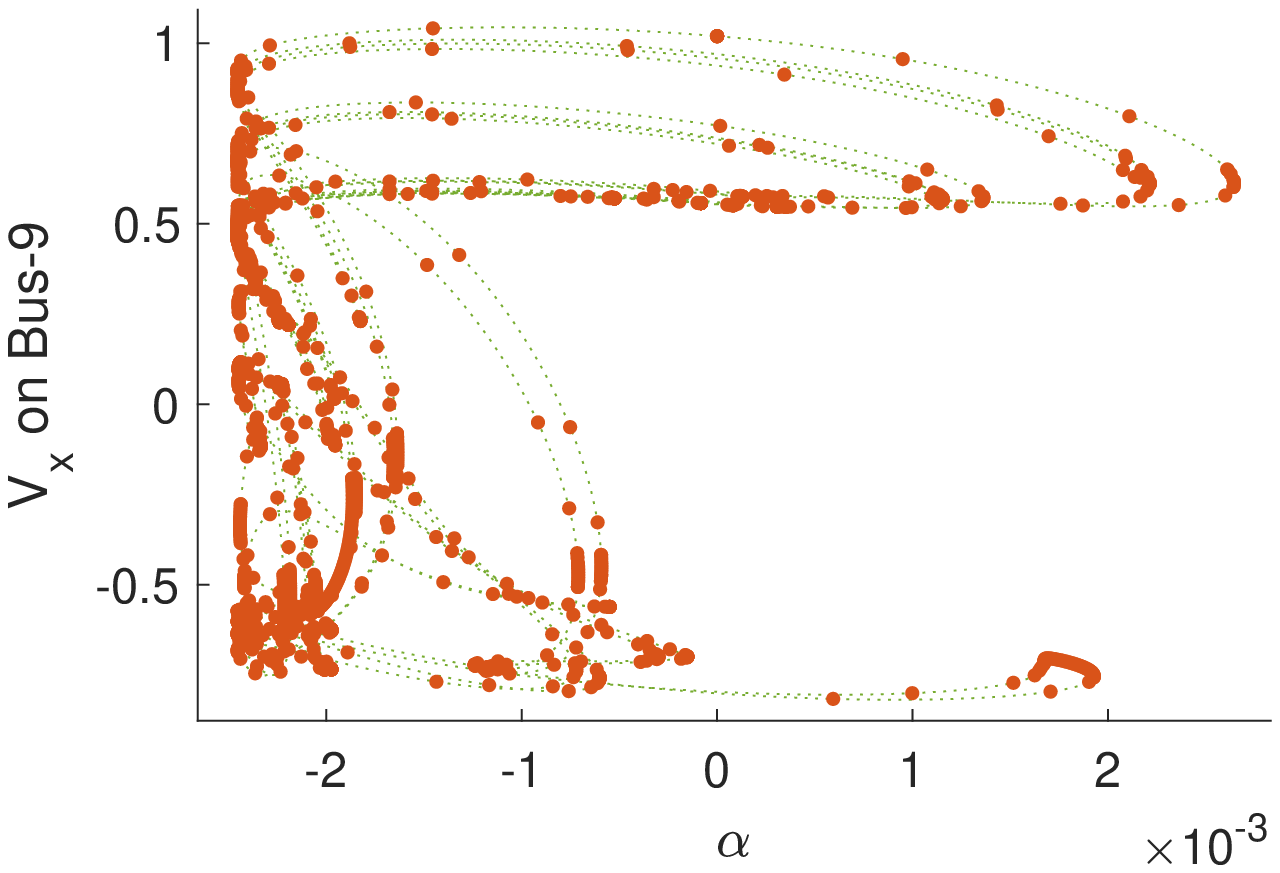}}\\
	\subfigure[Case33bw by Predictor-Corrector]{\label{fig:33pc}\includegraphics[width=0.39\columnwidth]{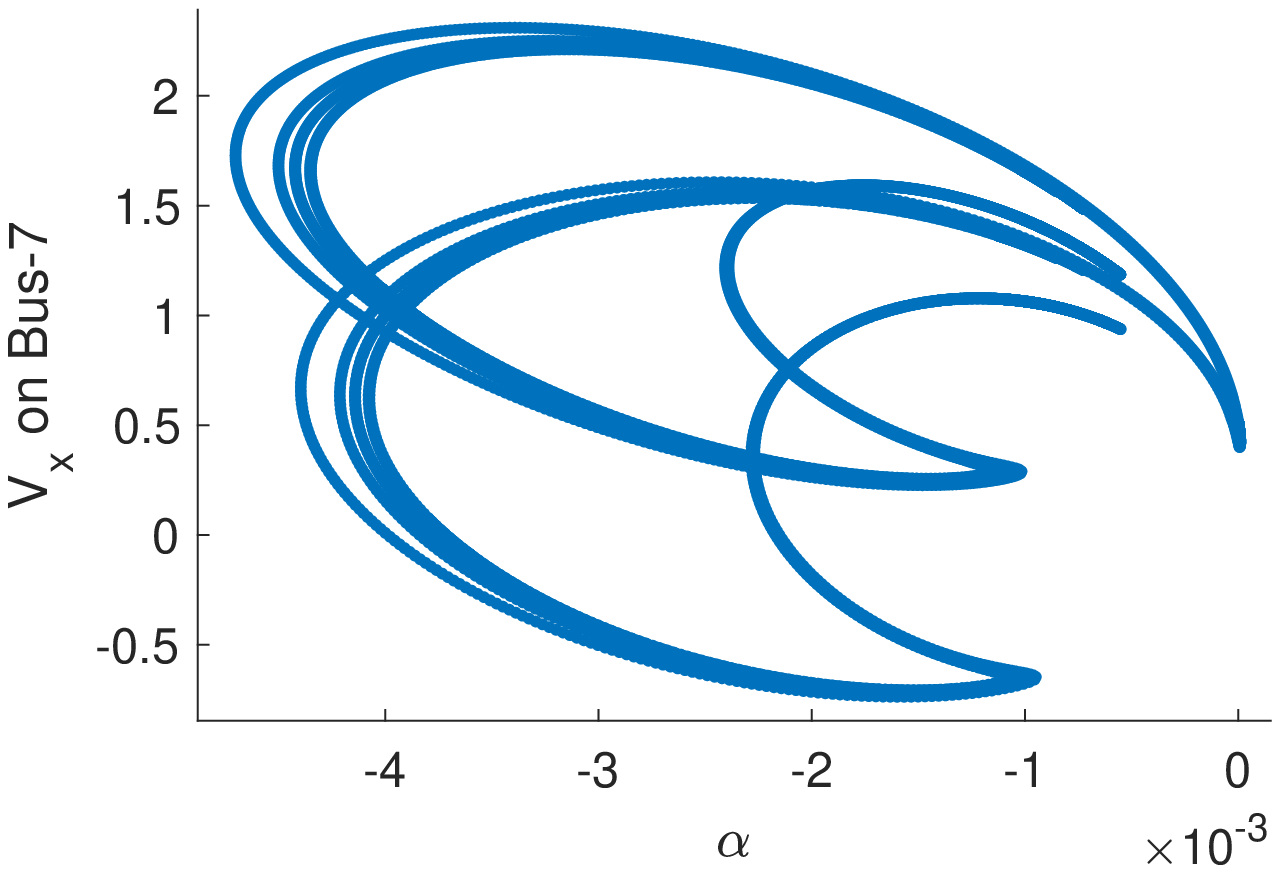}}~~~~
	\subfigure[Case33bw by HEBC ]{\label{fig:33he}\includegraphics[width=0.39\columnwidth]{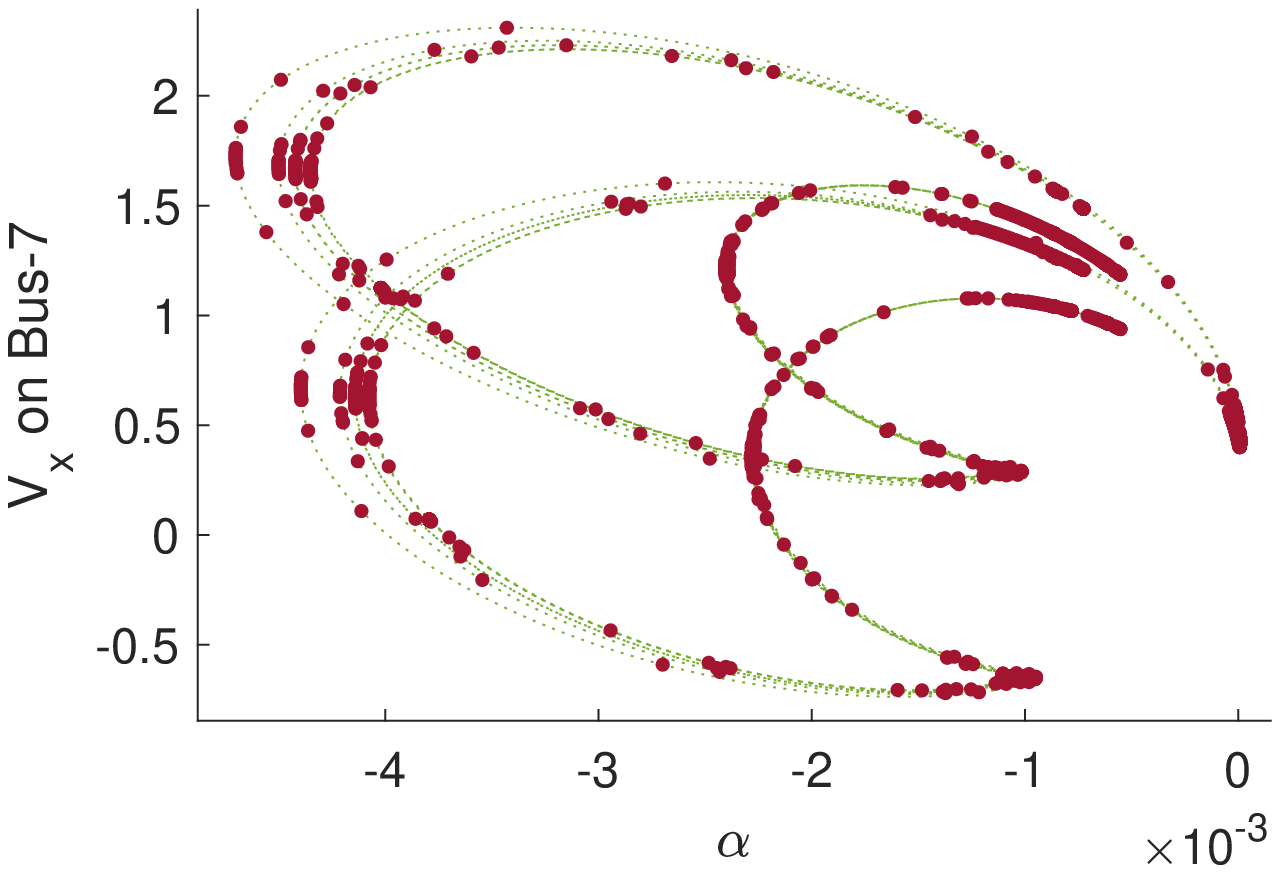}}\\
	\subfigure[Case57 by Predictor-Corrector]{\label{fig:57pc}\includegraphics[width=0.39\columnwidth]{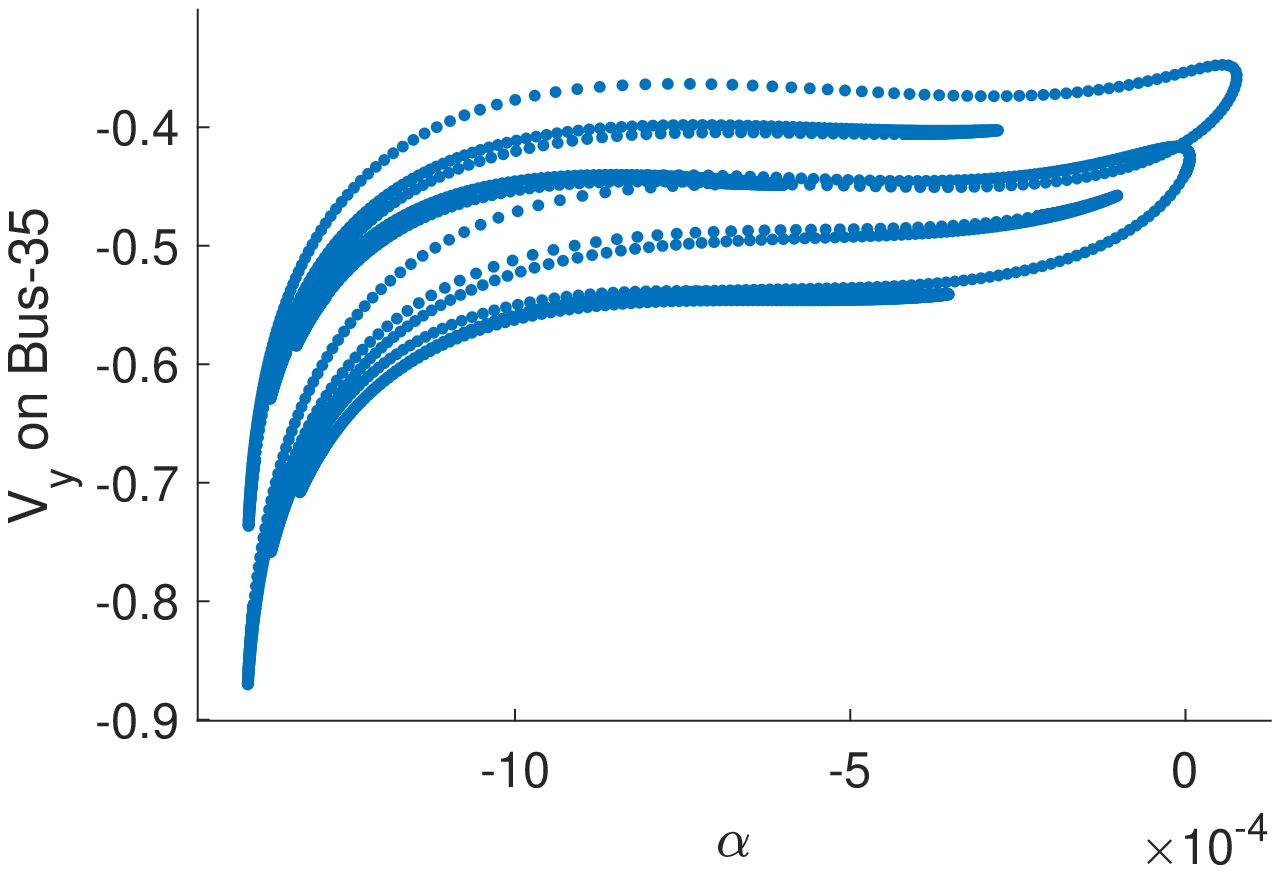}}~~~~
	\subfigure[Case57 by HEBC ]{\label{fig:57he}\includegraphics[width=0.39\columnwidth]{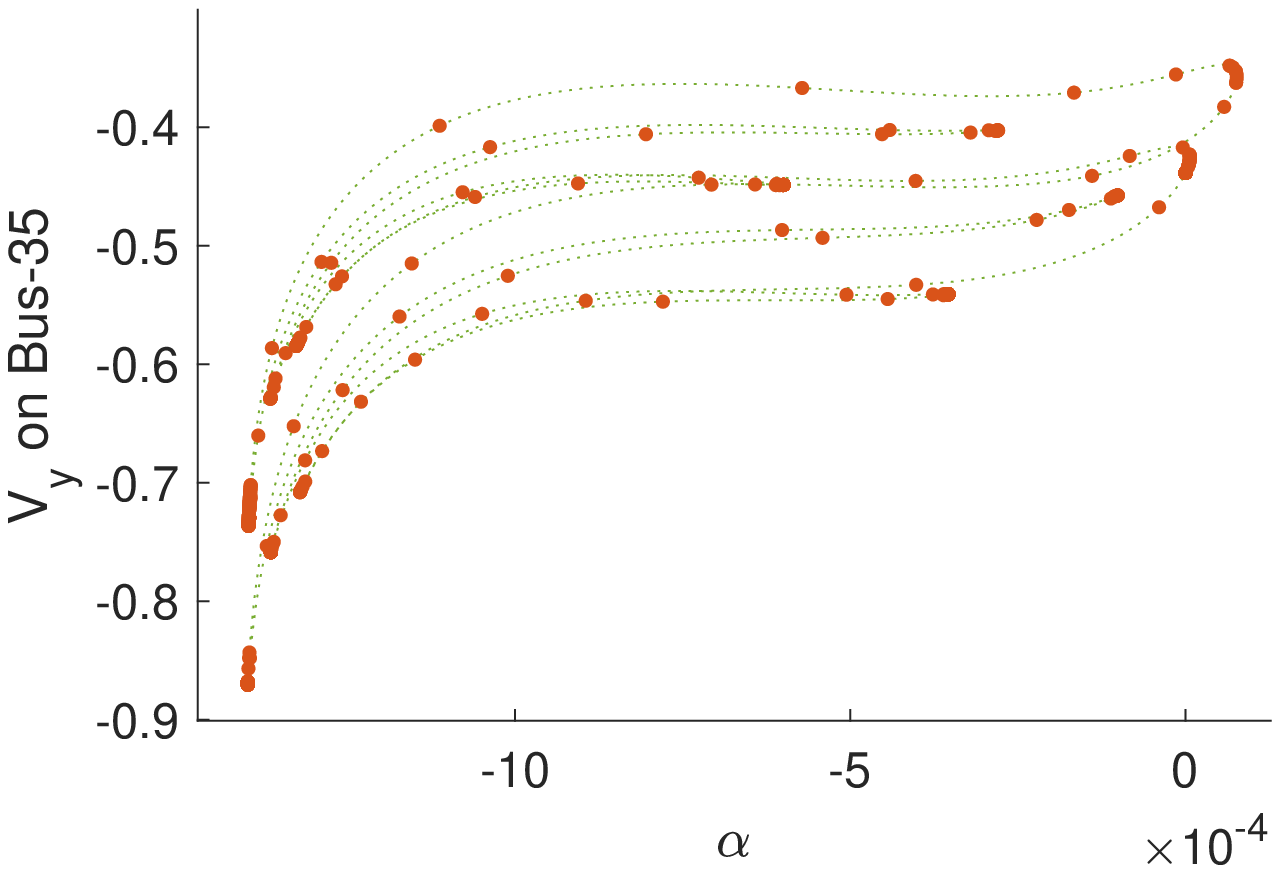}}
	\caption{Sample Curves Followed by Predictor-Corrector (Left) and HEBC (Right)} \label{fig:curves}
\end{figure}
%


\end{document}